\numberwithin{equation}{section}
\newtheorem{Theorem}{Theorem}[section]
\newtheorem{Lemma}[Theorem]{Lemma}
\newtheorem{Proposition}[Theorem]{Proposition}
 { \theoremstyle{definition}
\newtheorem{Example}[Theorem]{Example}
\newtheorem{Remark}[Theorem]{Remark} }
\begin{document}

\allowdisplaybreaks

\newcommand{\arXivNumber}{1706.02873}

\renewcommand{\PaperNumber}{077}

\FirstPageHeading

\ShortArticleName{Non-Homogeneous Hydrodynamic Systems and Quasi-St\"{a}ckel Hamiltonians}

\ArticleName{Non-Homogeneous Hydrodynamic Systems\\ and Quasi-St\"{a}ckel Hamiltonians}

\Author{Krzysztof MARCINIAK~$^\dag$ and Maciej B{\L}ASZAK~$^\ddag$}

\AuthorNameForHeading{K.~Marciniak and M.~B{\l}aszak}

\Address{$^\dag$~Department of Science and Technology, Campus Norrk\"{o}ping, Link\"{o}ping University, Sweden}
\EmailD{\href{krzma@itn.liu.se}{krzma@itn.liu.se}}

\Address{$^\ddag$~Faculty of Physics, Division of Mathematical Physics, A.~Mickiewicz University,\\
\hphantom{$^\ddag$}~Pozna\'{n}, Poland}
\EmailD{\href{blaszakm@amu.edu.pl}{blaszakm@amu.edu.pl}}

\ArticleDates{Received June 12, 2017, in f\/inal form September 25, 2017; Published online September 28, 2017}

\Abstract{In this paper we present a novel construction of non-homogeneous hydrodynamic equations from what we call quasi-St\"{a}ckel systems, that is non-commutatively integrable systems constructed from appropriate maximally superintegrable St\"{a}ckel systems. We describe the relations between Poisson algebras generated by quasi-St\"{a}ckel Hamiltonians and the corresponding Lie algebras of vector f\/ields of non-homogeneous hydrodynamic systems. We also apply St\"{a}ckel transform to obtain new non-homogeneous equations of considered type.}

\Keywords{Hamiltonian systems; superintegrable systems; St\"{a}ckel systems; hydrodynamic systems; St\"{a}ckel transform}

\Classification{70H06; 70H20; 35F50; 53B20}

\section{Introduction}

In the present paper we investigate a class of f\/irst-order quasi-linear PDEs of the form
\begin{gather}
q_{t}=K(q)q_{x}+Y(q),\label{1.1}
\end{gather}
where $q=(q_{1},\dots ,q_{n})^{T},$ $q_{i}=q_{i}(x,t)$, $K$ is an $n\times n$ matrix depending on $q$ and $Y$ is a vector depending on $q$ (vector f\/ield). The systems~(\ref{1.1}) are called in literature non-homogenous hyd\-ro\-dynamic or dispersionless systems. Specif\/ically, we shall focus on the systems~(\ref{1.1}) with the homogenous part $K(q)q_{x}$ being semi-Hamiltonian in the sense of Tsarev \cite{Tsarev1991} and weakly non\-li\-near~\cite{roz}. The homogeneous case $Y=0$ has been intensively studied in literature due to the fact that such equations are strongly connected with classical integrable systems. For example, any solution of the homogeneous equation $q_{t}=K(q)q_{x}$ is also a solution of some particular Liouville integrable and separable f\/inite-dimensional system, so called St\"{a}ckel system~\cite{Zenia1991, Zenia1997}. The matrices $K(q)$ can in this context be interpreted as $(1,1)$-Killing tensors of St\"{a}ckel metrics~\cite{maciej1}.

The aim of this paper is to construct a class of non-homogeneous hydrodynamic equations of type (\ref{1.1}) that can still be related to f\/inite-dimensional integrable systems. This issue was f\/irst considered in \cite{Zenia1997a, Zenia1999} but only for the case of two integrals, where the authors searched, for a given quadratic in momenta Hamiltonian, an integral of motion with quadratic in momenta terms but also having additional terms linear in momenta (magnetic terms), generating non-homogeneous terms in the related hydrodynamic equation. They noticed that these linear in momenta terms are given by vectors that are Killing vectors for the metric of the f\/irst Hamiltonian.

Another approach to the idea of constructing Liouville integrable systems with linear in momenta terms (restricted to cases $n=2$ and $3$)~has been presented in~\cite{marikhin2005} and later in~\cite{marikhin2014}, where the authors def\/ined these systems as so called quasi-St\"{a}ckel systems and also noted some connections of such systems with non-homogenous hydrodynamic equations.

Inspired by these results we will look for a way~-- valid in any dimension~-- of modifying the St\"{a}ckel Hamiltonians by linear in momenta terms (which makes the corresponding hydrodynamic equations non-homogeneous) such that $(i)$~the modif\/ied system def\/ines a maximally superintegrable system where the extra integrals of motion are all linear in momenta and $(ii)$~it constitutes a~non-commutative Poisson algebra. As a consequence, the structure constants of this algebra and the algebra of corresponding non-homogeneous hydrodynamic vector f\/ields will be (up to a sign) the same. We call these modif\/ied systems quasi-St\"{a}ckel systems as well, but we want to stress that our def\/inition of quasi-St\"{a}ckel systems is dif\/ferent from the def\/inition in~\cite{marikhin2014}, as there the author demands these systems to be Liouville integrable while our systems are integrable only in the non-commutative sense.

The paper is organized as follows. In Section~\ref{section2} we brief\/ly remind the relation between homogeneous hydrodynamic Killing equations and geodesic St\"{a}ckel systems. In Section~\ref{section3} we restrict our attention to a particular class of St\"{a}ckel systems, namely Benenti class with constant curvature metrics (it should be pointed out that there is a substantial literature on constant curvature systems with magnetic terms, see for example~\cite{MSW} and references there). Each system of this class consists of $n$ geodesic Hamiltonians $E_{i}$. For this particular class we f\/ind $n-1$ Killing vectors $Y_{i}$ of the metric tensor of the f\/irst Hamiltonian $E_{1},$ generating additional linear in momenta integrals $W_{i}$ which yields a maximally superintegrable St\"{a}ckel system. Then we investigate the Poisson algebras generated by the Hamiltonians $h_{i}=E_{i}+W_{i}$ which we call \emph{quasi-St\"{a}ckel Hamiltonians}. Further, we show that the quasi-St\"{a}ckel Hamiltonians $h_{i}$ can be generated from a new type of linear relations that generalize the separation relations for the St\"{a}ckel Hamiltonians $E_{i}$. We call this new type of relations \emph{quasi-separation relations} (which for $n=3$ are particular realizations of the quasi-St\"{a}ckel systems from~\cite{marikhin2014}). In Section~\ref{section4} we prove a theorem (Theorem~\ref{jedyne}) which establishes an explicit relation between Poisson algebras of $h_{i}$ and Lie-algebras of the corresponding non-homogeneous hydrodynamic vector f\/ields~(\ref{1.1}). In Section~\ref{section5} we exploit the notion of St\"{a}ckel transform \cite{maciej2, maciej3, Boyer,H, Artur} and analyze which of our systems~$h_{i}$ can be mapped by this transform to new systems $\tilde{h}_{i}$ in such a way that the Hamiltonians $\tilde{h}_{i}$ also constitute an algebra. In this way we obtain new non-homogeneous hydrodynamic equations.

Let us point out that this article does not deal with the problem of integrability of the obtained non-homogeneous hydrodynamic equations. This is a~separate problem yet to investigate.

\section[Homogeneous hydrodynamic Killing equations generated by geodesic St\"{a}ckel systems]{Homogeneous hydrodynamic Killing equations\\ generated by geodesic St\"{a}ckel systems}\label{section2}

In this section we brief\/ly remind some facts about the relation between hydrodynamic Killing equations and geodesic St\"{a}ckel systems.

Consider the following $n$ separation relations
\begin{gather}
\sum_{j=1}^{n}\Phi_{ij} ( \lambda_{i} ) E_{j}=\frac{1}{2}f_{i} ( \lambda_{i} ) \mu_{i}^{2},\qquad i=1,\ldots,n,\label{1}
\end{gather}
on $M=\mathbf{R}^{2n}$, where $\Phi_{ij}$ and $f_{i}$ are arbitrary functions of one real variable. Since the $i$-th row of the matrix $\Phi(\lambda)$ depends on $\lambda_{i}$ only, it is the so called St\"{a}ckel matrix. The variables $(\lambda,\mu) =( \lambda_{1},\ldots,\lambda_{n},\mu_{1},\ldots,\mu_{n}) $ will be in the sequel referred to as position and momentum separation coordinates on the phase space~$M$. Solving the linear system (\ref{1}) with respect to~$E_{j}$ yields~$n$ quadratic in momenta functions (Hamiltonians) on~$M$
\begin{gather}
E_{r}=\frac{1}{2}\mu^{T}A_{r}\mu , \qquad r=1,\ldots,n,\label{1a}
\end{gather}
where $A_{r}$ are $n\times n$ matrices given by
\begin{gather*}
A_{r}=\operatorname{diag}\big( f_{1}(\lambda_{1}) \big( \Phi ^{-1}\big) _{r1},\ldots,f_{n} ( \lambda_{n} ) \big( \Phi ^{-1}\big) _{rn}\big), \qquad r=1,\ldots,n.
\end{gather*}
By their construction the functions $E_{r}$ are in involution with respect to the canonical Poisson bracket on $M$ given by $\{ \lambda_{i}, \mu_{j}\} =\delta_{ij}$. They are referred to in literature as \emph{geodesic St\"{a}ckel Hamiltonians}. Further, we can factorize $A_{r}$ as $A_{r}=K_{r}G$ with \begin{gather}
G=A_{1}=\operatorname{diag}\big( f_{1}(\lambda_{1}) \big( \Phi
^{-1}\big) _{11},\ldots,f_{n}(\lambda_{n}) \big( \Phi ^{-1}\big) _{1n}\big) \label{1b}
\end{gather}
and with
\begin{gather*}
K_{r}=\operatorname{diag}\left( \frac{\big(\Phi^{-1}\big) _{r1}}{\big(\Phi^{-1}\big) _{11}},\ldots,\frac{\big(\Phi^{-1}\big) _{rn}}{\big(\Phi^{-1}\big) _{1n}}\right) , \qquad r=1,\ldots ,n
\end{gather*}
(so that $K_{1}=I$). From now on we will interpret the matrix $G$ as a~contravariant form of a~metric tensor on~$M$. The corresponding covariant metric tensor will be denoted by $g$ so that $gG=I$. It can be shown that the matrices $K_{r}$ are then $(1,1)$-Killing tensors of the metric~$G$. For a~f\/ixed St\"{a}ckel matrix $\Phi$ we have thus the whole family of metrics~$G$ parametrized by~$n$ arbitrary functions $f_{i}$ of one variable. The tensors $K_{r}$ are then Killing tensors for any metric from this family. Thus, the St\"{a}ckel Hamiltonians $E_{r}$ are geodesic Hamiltonians of a Liouville integrable system in the Riemannian space $(M,g)$. The Hamiltonian equations for~$E_{r}$ are given by
\begin{gather}
\lambda_{t_{r}}=\frac{\partial E_{r}}{\partial\mu},\qquad \mu_{t_{r}}=-\frac{\partial E_{r}}{\partial\lambda},\qquad r=1,\dots ,n,\label{2}
\end{gather}
with the implicit common solution for all the positions $\lambda_{i}=\lambda_{i}(t_{1},\dots ,t_{n})$ in the form
\begin{gather}
\sum_{k=1}^{n}\int^{\lambda_{k}}\frac{\Phi_{k,n-r}(\xi)}{\varphi_{k}(\xi)}d\xi=t_{r},\qquad r=1,\dots ,n,\label{3}
\end{gather}
where
\begin{gather}
\varphi_{k}(\xi)=\left( \frac{1}{2}f_{k}(\xi){\sum_{j=1}^{n}}\Phi_{kj}(\xi)a_{j}\right) ^{\frac{1}{2}}\label{3.5}
\end{gather}
and $a_{j}$ are arbitrary constants parametrizing the Liouville tori (values of Hamiltonians $E_{r}$). The f\/irst part of Hamiltonian equations~(\ref{2})
can be explicitly written as $\lambda_{t_{r}}=K_{r}G\mu$ so that $G\mu =\lambda_{t_{1}}$ and every solution $\lambda_{i}(t_{1},\ldots,t_{n})$ of~(\ref{2}) satisf\/ies the following system of hydrodynamic equations
\begin{gather}
\lambda_{t_{r}}=K_{r}(\lambda)\lambda_{x}\equiv Z_{r}(\lambda,\lambda _{x}) , \qquad r=2,\dots ,n,\label{4}
\end{gather}
where we denoted $t_{1}=x$. We call the system~(\ref{4}) a \emph{hydrodynamic Killing system}. The dif\/ferential functions $Z_{r}$ can be interpreted as vector f\/ields on an inf\/inite-dimensional space of functions $\lambda(x)$ and since $ \{ E_{i},E_{j} \} =0$ one can show that also the vector f\/ields $Z_{r}$ commute: $ [ Z_{r},Z_{s} ] =0$ for all $r,s=1,\ldots,n$ (where $Z_{1}=\lambda_{x}$ is the translational symmetry). The system~(\ref{4}) is also known in the literature as a weakly nonlinear semi-Hamilton system and since~$K_{r}$ are diagonal in $\lambda$-variables the variables~$\lambda$ are in this context known as \emph{Riemann invariants}. The general solution of~(\ref{4}) is also given by~(\ref{3}) if we allow~$\varphi_{k}$ to be arbitrary. Therefore, due to~(\ref{3.5}), any solution of the hydrodynamic system (\ref{4}) can be obtained as a~solution of a~particular St\"{a}ckel system with the suitably chosen functions~$f_{i}$.

\section{Quasi-St\"{a}ckel systems}\label{section3}

The aim of this paper is to construct non-homogeneous hydrodynamic equations that can still be related to some f\/inite-dimensional integrable systems. The f\/irst steps in that direction was made by Ferapontov and Fordy in \cite{Zenia1997a, Zenia1999}. As we wrote in Introduction we will modify the geodesic Hamiltonians $E_{i}$ by linear in momenta constants of motion $W_{i} $ for Hamiltonian $E_{1}$, which makes the corresponding hydrodynamic equations non-homogeneous, so that $(i)$ the modif\/ied system def\/ines a maximally superintegrable system where the extra integrals of motion are all linear in momenta and $(ii)$ it constitutes a non-commutative Poisson algebra. As a~consequence, the structure constants of this algebra and the algebra of corresponding hydrodynamic vector f\/ields will be (up to a sign) the same.

It is well known that the linear in momenta function $W=p^{T}Y= \sum_{i} p_{i}Y^{i}$ on $M$ is an integral of motion of $E_{1}$ if and only if $Y=\sum_{i} Y^{i}(q)\frac{\partial}{\partial q_{i}}$ is a Killing vector for the metric~$G$. Therefore, we have to f\/ind enough Killing vectors $Y_{r}$ of $G$ in order to perform our task. In analogy with the homogeneous case, the f\/irst part of Hamiltonian equations~(\ref{2}) for the modif\/ied Hamiltonians $h_{r}=E_{r}+W_{r}$ will then take the non-homogenous form
\begin{gather}
\lambda_{t_{r}}=K_{r}(\lambda)\lambda_{x}+Y_{r}(\lambda)\equiv Z_{r} (\lambda,\lambda_{x}) , \qquad r=2,\dots ,n,\label{4a}
\end{gather}
where $W_{r}=p^{T}Y_{r}$.

We will solely work with the metrics of constant curvature. It is well known that the Lie algebra of Killing vectors of constant curvature metrics is of the maximal dimension $n(n+1)/2$, but the problem of identifying all the constant-curvature metrics of the form (\ref{1b}) is not solved yet, so we will focus on some particular classes of metrics of the form~(\ref{1b}) which are known to be of constant curvature.

\subsection{Benenti class of St\"{a}ckel systems}

We will thus impose two restrictions on the metric $G$. Firstly, we will only consider a class of St\"{a}ckel systems (\ref{1}) given by the St\"{a}ckel matrix of the very particular form $\Phi_{ij}=\lambda_{i}^{n-j}$. This results in the following separation relations
\begin{gather}
\sum\limits_{j=1}^{n} \lambda_{i}^{n-j}E_{j}=\frac{1}{2}f_{i}(\lambda_{i})\mu_{i}^{2},\qquad i=1,\ldots,n.\label{Ben}
\end{gather}
Such systems are called in literature \emph{Benenti systems}. Moreover, in this case the metric tensor~(\ref{1b}) attains the explicit form
\begin{gather*}
G=A_{1}=\operatorname{diag}\left( \frac{f_{1}(\lambda_{1}) }{\Delta _{1}},\ldots,\frac{f_{n}(\lambda_{n}) }{\Delta_{n}}\right)
,\qquad \Delta_{i}=\prod_{j\neq i}^{n}\left( \lambda_{i}-\lambda_{j}\right), 
\end{gather*}
while the Killing tensors $K_{r}$ are
\begin{gather}
K_{r}=(-1)^{r+1}\operatorname*{diag}\left( \frac{\partial\sigma_{r}}{\partial\lambda_{1}},\dots,\frac{\partial\sigma_{r}}{\partial\lambda_{n}}\right),
 \qquad r=1,\ldots,n,\label{Kr}
\end{gather}
where $\sigma_{r}(\lambda)$ are elementary symmetric polynomials in $\lambda$.

Secondly, we will also assume that all $f_{i}$ are equal to the same monomial of order not exceeding $n+1$
\begin{gather*}
f_{i}(\lambda_{i})=\lambda_{i}^{m}, \qquad m\in \{ 0,\ldots,n+1\},
\end{gather*}
which renders $n+2$ metrics
\begin{gather*}
G_{m}=\operatorname{diag}\left( \frac{\lambda_{1}^{m}}{\Delta_{1}},\ldots,\frac{\lambda_{n}^{m}}{\Delta_{n}}\right),\qquad m\in\left\{
0,\ldots,n+1\right\},
\end{gather*}
with the corresponding geodesic St\"{a}ckel Hamiltonians
\begin{gather}
E_{r}^{m}=\frac{1}{2}\mu^{T}K_{r}G_{m}\mu, \qquad r=1,\ldots,n.\label{erm}
\end{gather}

\begin{Remark}It can be shown that the metric $G_{m}$ is f\/lat for $m\in \{0,\ldots,n\} $ and of constant curvature for $m=n+1$.
The separation variables $\lambda_{i}$ can in this case be considered as appropriate degenerations of Jacobi elliptic coordinates, see~\cite{KM}.
\end{Remark}

In the separation coordinates (Riemann invariants) $\lambda$ the hydrodynamic equations~(\ref{4}) take a particular form
\begin{gather}
\frac{d\lambda}{dt_{r}}=(-1)^{r+1}\frac{\partial\sigma_{r}}{\partial\lambda }\lambda_{x},\qquad r=1,\ldots,n.\label{Kill}
\end{gather}
It turns out that the search for the linear in momenta integrals for $E_{1}$ is much easier in the coordinates
\begin{gather}
q_{i}(\lambda)=(-1)^{i}\sigma_{i}(\lambda).\label{q}
\end{gather}
We will refer to them as Vi\`{e}te coordinates. The corresponding conjugate momenta $p(\lambda,\mu)$ on~$M$ are given by
\begin{gather}
p_{i}=-\sum_{k=1}^{n}\frac{\lambda_{k}^{n-i}}{\Delta_{k}}\mu_{k}, \qquad i=1,\ldots,n.\label{p}
\end{gather}
In $q$-variables the components of $G_{m}$ are polynomials in $q$ given by~\cite{maciej4}
\begin{gather}
\left( G_{0}\right) ^{ij} =\sum_{k=0}^{n-1}q^{k}\delta_{n+k+1}^{i+j},\nonumber\\
\left( G_{m}\right) ^{ij} =
\begin{cases}
\displaystyle \sum\limits_{k=0}^{n-m-1}q_{k}\delta_{n-m+k+1}^{i+j},& i,j=1,\dots ,n-m,\\
\displaystyle-\sum\limits_{k=n-m+1}^{n}q_{k}\delta_{n-m+k+1}^{i+j}, & i,j=n-m+1,\dots ,n,\\
0, & \text{otherwise},
\end{cases} \qquad m=1,\dots ,n,\label{Gr}\\
( G_{n+1}) ^{ij} =q_{i}q_{j}-q_{i+j},\qquad i,j=1,\dots ,n\nonumber
\end{gather}
(where we denote $q_{0}=1$), while the hydrodynamic equations (\ref{4}) attain the form $q_{t_{r}}=K_{r}(q)q_{x}$ or, explicitly~\cite{maciej5}
\begin{gather}
\frac{d q_{j}}{dt_{r}}=( Z_{r}) ^{j}=(q_{j+r-1})_{x}+\sum
_{k=1}^{j-1}\big( q_{k}( q_{j+r-k-1}) _{x}-q_{j+r-k-1}(q_{k}) _{x}\big), \qquad r,j=1,\ldots,n.\!\!\!\!\label{Kilq}
\end{gather}

\subsection{Killing vectors and additional integrals of motion}

Let us def\/ine the following sets of indices:
\begin{gather*}
I_{1}^{m}= \{ 2,\ldots,n-m+1 \}, \qquad I_{2}^{m}= \{n-m+2,\ldots,n \}, \qquad m=0,\ldots,n+1.
\end{gather*}
The search for linear in momenta extra integrals of motion for~$E_{1}^{m}$ that are also linear in $q$ yields the following lemma.

\begin{Lemma}\label{W}For $m=0,\ldots,n-1$ the functions
\begin{gather}
W_{r}^{m}=\sum_{i=1}^{r-1}iq_{r-i-1}p_{n-m-i+1}, \qquad r\in I_{1}^{m},\label{W1}
\end{gather}
and for $m=2,\ldots,n+1$ the functions
\begin{gather}
W_{r}^{m}=\sum_{i=1}^{n-r+1}iq_{r+i-1}p_{n-m+i+1}, \qquad r\in I_{2}^{m}\label{W2}
\end{gather}
are constants of motion for $E_{1}^{m}$.
\end{Lemma}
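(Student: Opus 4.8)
The statement just before the lemma reduces the claim to a Killing-vector computation: writing $W_r^m = p^T Y_r^m = \sum_j (Y_r^m)^j p_j$, the function $W_r^m$ Poisson-commutes with the geodesic Hamiltonian $E_1^m = \tfrac12 p^T G_m p$ if and only if the (linear in $q$) vector field $Y_r^m$ is a Killing vector of the metric $G_m$. Expanding $\{W_r^m, E_1^m\}$ in the canonical bracket $\{q_i,p_j\}=\delta_{ij}$ and collecting the coefficient of each monomial $p_k p_l$, this condition takes the explicit contravariant form
\[
\sum_{i=1}^{n} (Y_r^m)^i \frac{\partial (G_m)^{kl}}{\partial q_i}
-\sum_{j=1}^{n} \frac{\partial (Y_r^m)^k}{\partial q_j}(G_m)^{jl}
-\sum_{j=1}^{n} \frac{\partial (Y_r^m)^l}{\partial q_j}(G_m)^{jk}=0,\qquad k,l=1,\dots,n .
\]
The entire task is then to verify that the coefficient vectors read off from~(\ref{W1}) and~(\ref{W2}) satisfy this system for the explicit metric~(\ref{Gr}).

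First I would read off the components. From~(\ref{W1}) the coefficient of $p_{n-m-i+1}$ is $i\,q_{r-i-1}$, so $(Y_r^m)^{n-m-i+1}=i\,q_{r-i-1}$ for $i=1,\dots,r-1$ and all other components vanish; similarly from~(\ref{W2}) one gets $(Y_r^m)^{n-m+i+1}=i\,q_{r+i-1}$ for $i=1,\dots,n-r+1$. Two structural features make the substitution tractable. The components are \emph{linear} in $q$, so the derivatives $\partial (Y_r^m)^k/\partial q_j$ on the second and third terms are constants (either $0$ or one of the integer weights). Moreover, the supports of the two families lie in complementary index blocks: for the first family all momentum indices $n-m-i+1$ are $\le n-m$, which is exactly the range of the upper-left block of $G_m$ in~(\ref{Gr}), while for the second family the indices $n-m+i+1$ all lie in the lower-right block $>n-m$. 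Hence each family interacts with only one block of the block-diagonal metric, and the two boundary cases $m=0$ and $m=n+1$ are handled using the separate expressions for $(G_0)^{ij}$ and $(G_{n+1})^{ij}$.

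The computation then reduces to inserting the Kronecker-delta form $(G_m)^{kl}=\pm\sum_{s}q_s\,\delta^{k+l}_{n-m+s+1}$ into the three terms. Because this delta selects the single admissible value $s=k+l-n+m-1$, each component $(G_m)^{kl}$ equals one monomial $q_s$ (with $q_0=1$ contributing a constant whose derivative vanishes), so the first sum above collapses to the single component $(Y_r^m)^{\,k+l-n+m-1}$, while the remaining two sums each become one shifted term. Matching these, the integer weights $i$ appearing in~(\ref{W1})--(\ref{W2}) are precisely what is needed for the two sides to coincide.

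The main obstacle is this final index-matching step: it is elementary but combinatorially delicate, since one must simultaneously track the shift $k+l\mapsto n-m+s+1$ built into the metric, the shift $j\mapsto n-m-i+1$ (respectively $j\mapsto n-m+i+1$) built into $Y_r^m$, and the range restrictions that decide to which block a given pair $(k,l)$ belongs (in particular the behaviour at the block boundaries and at the constant entry $q_0=1$). I expect no conceptual difficulty beyond this bookkeeping, so I would organize the verification block by block, treating the upper block (first family) and the lower block (second family) separately, and checking the two exceptional metrics $G_0$ and $G_{n+1}$ as special cases of the same scheme.
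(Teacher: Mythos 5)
The paper states Lemma~\ref{W} without any proof (the text merely says that the ``search for linear in momenta extra integrals \dots\ yields the following lemma''), so there is no argument of the authors' to compare yours against; on its own terms your strategy is correct and is the natural one. The reduction to the contravariant Killing equation, the observation that the two families of vectors are supported in complementary index blocks of $G_m$ (so that all cross-block components of $L_{Y_r^m}G_m$ vanish trivially, including the first sum, since the upper block depends only on $q_s$ with $s\le n-m-1$ and the lower block only on $q_s$ with $s\ge n-m+1$), and the collapse of each sum to a single monomial via the Kronecker delta are all sound. For the record, the identity that closes the in-block computation is, writing $c=n-m$, that for $k,l$ in the upper block all three surviving terms are proportional to the same monomial $q_{k+l+r-2c-3}$ with weights satisfying $(2c-k-l+2)=(c-k+1)+(c-l+1)$, the boundary cases being covered by the conventions $q_0=1$, $q_s=0$ for $s<0$; so your asserted ``index matching'' does go through. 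Two caveats: first, your proposal stops exactly where the content of the proof lies --- the weight matching is asserted rather than carried out, and a referee would want at least the displayed identity above; second, the case $m=n+1$ does \emph{not} fit ``the same scheme,'' because $(G_{n+1})^{ij}=q_iq_j-q_{i+j}$ is quadratic in $q$ and is not of the single-delta monomial form, so it requires a separate (still elementary, and verifiable) computation rather than being a special case of the block bookkeeping.
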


For each $m$ between $1$ and $n$, this lemma yields $n-1$ (and $n$ in case $m=0$ and $m=n+1$) linear in both momenta $p$ and positions $q$ additional constants of motion for $E_{1}^{m}$ with the corresponding Killing vector f\/ields $Y_{r}^{m}$ for $G_{m}$ so that
\begin{gather*}
W_{r}^{m}=p^{T}Y_{r}^{m}=\sum_{i=1}^{n}p_{i}\left( Y_{r}^{m}\right) ^{i}.
\end{gather*}

\begin{Lemma}For each $m\in\{ 0,\ldots,n+1\} $ the $2n-1$ functions
\begin{gather}
E_{1}^{m},\ldots, E_{n}^{m},W_{2}^{m},\ldots,W_{n}^{m}\label{system}
\end{gather}
constitute a maximally superintegrable system with respect to $E_{1}^{m}$.
\end{Lemma}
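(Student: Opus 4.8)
The statement asserts two things: that the $2n-1$ functions (\ref{system}) are all constants of motion of $E_1^m$, and that they are functionally independent, so that their number is the maximal one possible on the $2n$-dimensional phase space $M$. The first assertion is immediate. The geodesic St\"{a}ckel Hamiltonians $E_1^m,\dots,E_n^m$ are in involution by their construction in Section~\ref{section2}, so in particular $\{E_1^m,E_r^m\}=0$; and $\{E_1^m,W_r^m\}=0$ for $r=2,\dots,n$ is exactly the content of Lemma~\ref{W}. Thus the whole burden lies in functional independence, and this is the step I would concentrate on.

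I would carry out the computation in the Vi\`{e}te coordinates $(q,p)$, in which, by (\ref{Gr}) and (\ref{erm}), every $E_r^m$ is homogeneous of degree two in $p$ with coefficients polynomial in $q$, while by (\ref{W1})--(\ref{W2}) every $W_r^m$ is homogeneous of degree one in $p$ with coefficients linear in $q$. Functional independence at a generic point is equivalent to the non-vanishing of the $(2n-1)$-form
\[
\Omega=dE_1^m\wedge\cdots\wedge dE_n^m\wedge dW_2^m\wedge\cdots\wedge dW_n^m .
\]
I would split each differential into its $dq$-part and its $dp$-part: for $dE_r^m$ the $dq$-part has coefficients of $p$-degree two and the $dp$-part of $p$-degree one, whereas for $dW_r^m$ the $dq$-part has $p$-degree one and the $dp$-part $\sum_i(\partial W_r^m/\partial p_i)\,dp_i$ has $p$-degree zero.

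The key simplification I expect is a clean bigrading. A nonzero $(2n-1)$-form on $M$ must contain either all $n$ of the $dp_i$ (together with $n-1$ of the $dq_j$) or exactly $n-1$ of them (together with all $n$ of the $dq_j$). Counting the degree contributions above shows that the first type occurs only in total $p$-degree $2n-1$ and the second only in $p$-degree $2n$; hence these two parts of $\Omega$ cannot cancel one another, and it suffices to prove that the ``all $dp$'' part $\Omega_A$ is not identically zero. Here the explicit form (\ref{W1})--(\ref{W2}) is decisive: the functions $W_2^m,\dots,W_n^m$ are triangular in the momenta, each one introducing a single new momentum with a nonvanishing leading coefficient, and the two index families $I_1^m$ and $I_2^m$ occupy the complementary ranges $p_1,\dots,p_{n-m}$ and $p_{n-m+2},\dots,p_n$ (so that exactly $p_{n-m+1}$ is left out). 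Consequently the wedge of the $dp$-parts of the $dW_r^m$ equals a nonzero function of $q$ times $\bigwedge_{k\neq n-m+1}dp_k$, supplying $n-1$ independent $dp$-directions; the remaining direction $dp_{n-m+1}$, together with the $n-1$ needed $dq$-directions, is furnished by the $dE_r^m$, whose momentum Jacobian $[\,\partial E_r^m/\partial\mu_i\,]$ is nonsingular at generic points -- a Vandermonde-type determinant built from the Killing tensors $K_r$ of (\ref{Kr}), which is precisely the nonsingularity underlying the functional independence of the St\"{a}ckel Hamiltonians themselves.

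The delicate point, which I expect to be the main obstacle, is that $\Omega_A$ is a sum of several mixed contributions, since the $n$ required $dp$'s may be distributed between the $dp$-parts of the $W$'s and of the $E$'s in different ways; one must therefore rule out cancellation among them rather than read off a single dominant term. I would control this by exploiting the disjoint momentum supports of $I_1^m$ and $I_2^m$ together with the precise triangular pivots, which isolate the monomial $\bigl(\bigwedge_{k\neq n-m+1}dp_k\bigr)\wedge dp_{n-m+1}\wedge(\text{complementary }dq)$ whose coefficient is a nonzero polynomial in $(q,p)$. Since a nonzero polynomial vanishes only on a proper closed set, $\Omega$ is then nonzero on a dense open subset of $M$, yielding the required functional independence and hence maximal superintegrability with respect to $E_1^m$. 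The boundary values $m=0$ and $m=n+1$, where only one of the two families is present, are handled by the same argument, the single family now covering $p_2,\dots,p_n$ (respectively $p_1,\dots,p_{n-1}$) and leaving out the one complementary momentum.
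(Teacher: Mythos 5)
Your route is genuinely different from the paper's. The paper argues structurally: on a constant-curvature St\"ackel space the only functional relations within the full set of $n$ quadratic and $\frac12 n(n+1)$ linear integrals are of two known types (each $E_i$ expressible through the $W$'s, and quadratic relations among the $W$'s alone), and it checks that neither type survives on the subset (\ref{system}) because the differentials $dW_2^m,\dots,dW_n^m$ already have maximal rank $n-1$ while the $W_i^m$ involve too few of the $q$'s to reproduce any $E_j^m$. You instead attack the wedge $\Omega$ of all $2n-1$ differentials directly. The parts of your argument that you actually carry out are correct: the conserved-quantity claim does follow from involutivity plus Lemma~\ref{W}; the $p$-degree bigrading correctly splits $\Omega$ into an all-$dp$ piece of degree $2n-1$ and an $n$-$dq$ piece of degree $2n$, which cannot cancel each other; and your reading of (\ref{W1})--(\ref{W2}) is accurate --- the two families are triangular in the momenta with pivots $r-1$ and $(n-r+1)q_n$ and cover $p_1,\dots,p_{n-m}$ and $p_{n-m+2},\dots,p_n$, so the $dp$-parts of the $dW_r^m$ wedge to a nonzero multiple of $\bigwedge_{k\neq n-m+1}dp_k$ for $q_n\neq 0$.

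The gap is the step you yourself flag as the main obstacle: showing that the mixed contributions to $\Omega_A$ do not cancel. You only announce that you ``would control this by exploiting the disjoint momentum supports and the triangular pivots,'' but no mechanism is given, and none of the structure you have established rules out, say, a term in which one $dW_s^m$ contributes its $dq$-part while two $dE_r^m$ contribute their $dp$-parts: such a term lands on the same basis monomial $dp_1\wedge\cdots\wedge dp_n\wedge dq_J$ with the same total $p$-degree $2n-1$ as your would-be dominant term, so it can a priori cancel it. One clean way to close this is to evaluate at a point with $p=e_{n-m+1}$ (only the omitted momentum nonzero): since every $W_s^m$ is supported on momenta $p_k$ with $k\neq n-m+1$, at such a point $(dW_s^m)_q$ vanishes identically, each $dW_s^m$ reduces to its $dp$-part, and $\Omega_A$ collapses to a single product term; what then remains is to verify that $dE_1^m,\dots,dE_n^m$ are independent modulo $\operatorname{span}\{dp_k\colon k\neq n-m+1\}$ there, i.e., that the $n\times(n+1)$ matrix with rows $\big((A_r)^{n-m+1,\,n-m+1},\ \partial_{q_1}(A_r)^{n-m+1,\,n-m+1},\dots,\partial_{q_n}(A_r)^{n-m+1,\,n-m+1}\big)$ has rank $n$ --- a concrete computation that you would still need to supply. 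As it stands, the independence claim, and hence the lemma, is not yet established by your argument.
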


\begin{proof}
It is well known that if a St\"{a}ckel metric $G(q)$ on $Q\subset R^{n}$ is of constant curvature then there exists $n-1$ independent second-order contravariant Killing tensors $A_{2}(q),\dots ,A_{n}(q)$ of metric $G$, which commute with respect to the Schouten bracket~(\ref{Schouten}): $[A_{i},A_{j}]_{S}=0,\ i,j=2,\dots ,n$ and there exists $\frac{1}{2}n(n+1)$ Killing vectors $Y_{i}(q)$ of~$G$. In consequence, on the phase space $T^{\ast}Q$,
there exist $n$ functions quadratic in momenta $E_{1}=\frac{1}{2}p^{T}G(q)p,E_{2}=\frac{1}{2}p^{T}A_{2}(q)p,\dots ,E_{n}=\frac{1}{2}p^{T}A_{n}(q)p$ in involution with respect to the canonical Poisson bi-vector and $\frac{1}{2}n(n+1)$ linear in momenta functions $W_{i}=p^{T}Y_{i}(q)$, forming a Poisson algebra. In the set $(E,W)$ of $n+n(n+1)/2$ functions there exist two types of functional relations: 1)~all $E_{i}$ are expressible by $W$-functions, i.e., $E_{i}=E_{i}(W)$, $i=1,\dots ,n$ and 2~$\varphi_{k}(W)=0$, where $\varphi_{k}$ is an appropriate number (depending on~$n$) of functions, quadratic in all~$W$. Now, in order to prove the functional independence of the subset~(\ref{system}), it is suf\/f\/icient to show than none of these relations survive in the set~(\ref{system}). The functional independence of $W_{2},\dots ,W_{n}$ follows from the fact that the matrix of dif\/ferentials $dW_{2}^{m},\ldots,dW_{n}^{m}$ has for each $m$ the maximal rank $n-1$. Further, since for each~$m$ all $W_{i}^{m}$, $i=2,\dots ,n$ contain together less number of $q^{\prime}s$ then any $E_{j}$ does, so $E_{j}\neq E_{j}(W_{2},\dots ,W_{n})$, $j=1,\dots ,n$.
\end{proof}

From (\ref{W1}), (\ref{W2}) and (\ref{q}), (\ref{p}) it follows that in the Riemann invariants $\lambda$ the corresponding Killing vectors $Y_{r}^{m}$ of metrics $G_{m}$ are represented by the formulas
\begin{gather}
\big( Y_{r}^{m}\big) ^{i}=\sum\limits_{k=1}^{r-1}
(-1)^{r-k} k\sigma_{r-k-1}\frac{\lambda_{i}^{m+k-1}}{\Delta_{i}},\qquad r\in I_{1}^{m}\label{Y1}
\end{gather}
and
\begin{gather}
\big( Y_{r}^{m}\big) ^{i}=\sum\limits_{k=1}^{n-r+1}
(-1)^{r+k}k \sigma_{r+k-1}\frac{\lambda_{i}^{m-k-1}}{\Delta_{i}},\qquad r\in I_{2}^{m}.\label{Y2}
\end{gather}

An important observation is that both $E_{r}^{m}$ and $W_{r}^{m}$ are homogeneous functions of the same order $n-m-r+2$ with respect to the scaling
\begin{gather*}
p_{k}\rightarrow\varepsilon^{k}p_{k} ,\qquad q_{k}\rightarrow \varepsilon^{-k}q_{k},
\end{gather*}
so it is natural to consider their sums when constructing Hamiltonians with linear in momenta terms. Let us thus def\/ine, for each~$m$, the~$n$ functions
\begin{gather}
h_{1}^{m}=E_{1}^{m},\qquad h_{r}^{m}=E_{r}^{m}+W_{r}^{m}, \qquad r=2,\ldots,n.\label{h}
\end{gather}

As we mentioned in introduction, we call the functions $h_{r}^{m}$ \emph{quasi-St\"{a}ckel Hamiltonians}.

\begin{Theorem}The functions $h_{r}^{m}$, for each fixed $m$, constitute a Poisson algebra $\mathfrak{g}=\operatorname{Span}\{ h_{r}^{m}\colon$ $r=1,\ldots,n \} $ with the following commutation relations for $i,j=2,\ldots,n$:
\begin{gather}
\big\{ h_{i}^{m},h_{j}^{m}\big\} =
\begin{cases}
0, & \text{for }i\in I_{1}^{m}\text{ and } j\in I_{2}^{m},\\
(j-i)h_{i+j-(n-m+2)}^{m}, & \text{for }i,j\in I_{1}^{m},\\
-(j-i)h_{i+j-(n-m+2)}^{m}, & \text{for }i,j\in I_{2}^{m},
\end{cases} \label{str}
\end{gather}
where we use the notation $h_{i}^{m}=0$ for $i\leq0$ or $i>n$.
\end{Theorem}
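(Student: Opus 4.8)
The plan is to expand the bracket by bilinearity and sort the resulting terms by their degree in the momenta. Using the definition (\ref{h}) of $h_r^m=E_r^m+W_r^m$,
\[
\big\{h_i^m,h_j^m\big\}=\big\{E_i^m,E_j^m\big\}+\big\{E_i^m,W_j^m\big\}+\big\{W_i^m,E_j^m\big\}+\big\{W_i^m,W_j^m\big\}.
\]
The first term vanishes because the $E_r^m$ are Stäckel Hamiltonians, hence in involution. The two mixed terms are quadratic in $p$ while the last is linear in $p$, and since these two homogeneity classes cannot cancel one another, (\ref{str}) splits into an independent quadratic identity and linear identity to be checked separately. Before computing I would record the grading $p_k\mapsto\varepsilon^k p_k$, $q_k\mapsto\varepsilon^{-k}q_k$, under which both $E_r^m$ and $W_r^m$ have weight $n-m-r+2$; since the canonical symplectic form has weight $0$, the bracket $\{h_i^m,h_j^m\}$ has weight $2(n-m+2)-i-j$, which is exactly the weight of $h^m_{i+j-(n-m+2)}$. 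This forces the right-hand side, if it lies in $\mathfrak{g}$ at all, to be a scalar multiple of $h^m_{i+j-(n-m+2)}$, and reduces the theorem to pinning down one scalar in each index regime and verifying closure.

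For the linear part I would use that $Y\mapsto p^TY$ intertwines the Lie bracket and the Poisson bracket, $\{W_i^m,W_j^m\}=-p^T[Y_i^m,Y_j^m]$, but in practice compute directly in the canonical Viète coordinates $(q,p)$ from the explicit polynomials (\ref{W1}), (\ref{W2}). Each $W_r^m$ is a finite sum of monomials $q_a p_b$, and from $\{q_a p_b,q_c p_d\}=\delta_{ad}q_c p_b-\delta_{bc}q_a p_d$ the double sum telescopes to $\pm(j-i)W^m_{i+j-(n-m+2)}$. The decisive book-keeping is the index support: for $r\in I_1^m$ the momenta in (\ref{W1}) carry indices $\le n-m$, whereas for $r\in I_2^m$ the momenta in (\ref{W2}) carry indices $\ge n-m+2$. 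Hence when $i\in I_1^m$ and $j\in I_2^m$ the $q$- and $p$-supports are separated, every $\delta$ vanishes, and $\{W_i^m,W_j^m\}=0$; within each block the surviving $\delta$'s produce the sign $+$ on $I_1^m$ and $-$ on $I_2^m$. The convention $W_r^m=0$ for $r\le0$ is automatic from the empty summation range.

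The quadratic part is the substantive step and the main obstacle. Writing $A_r^m:=K_rG_m$, so that $E_r^m=\tfrac12 p^TA_r^m p$, the standard identity $\{\tfrac12 p^TAp,\,p^TY\}=\tfrac12 p^T(\mathcal{L}_YA)p$ reduces the claim to the purely tensorial statement
\[
\mathcal{L}_{Y_j^m}A_i^m-\mathcal{L}_{Y_i^m}A_j^m=\pm(j-i)\,A^m_{i+j-(n-m+2)},
\]
with the same signs and the same vanishing cross-case as in (\ref{str}). I would verify this in Viète coordinates from the explicit block expression (\ref{Gr}) for $G_m$ and (\ref{Kr}) for $K_r$. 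The difficulty is precisely the three-case structure of $(G_m)^{ij}$: the Lie derivative must be evaluated block by block, one must show that the blocks coupling an $I_1^m$-index to an $I_2^m$-index cancel in the antisymmetrized combination (matching the vanishing cross-bracket), and that the surviving blocks reassemble into $A^m_{i+j-(n-m+2)}$ with exactly the coefficient $\pm(j-i)$ already found from the linear part. The edge cases are comparatively easy: each $Y_r^m$ is a Killing vector of $G_m$ by Lemma~\ref{W}, so $\mathcal{L}_{Y_r^m}G_m=0$ and $h_1^m=E_1^m$ is central; an index $i+j-(n-m+2)$ outside $\{1,\dots,n\}$ gives $A^m=0$ in accordance with the stated convention, while the value $1$ reproduces the central $E_1^m$. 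Matching the two scalars then yields (\ref{str}) and shows that $\mathfrak{g}$ is closed.
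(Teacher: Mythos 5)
Your proposal follows essentially the same route as the paper: expand the bracket bilinearly, note that $\{E_i^m,E_j^m\}=0$, and then verify separately, by direct computation with the explicit formulas, that the $\{W,W\}$ part closes on the $W$'s and that $\{E_i^m,W_j^m\}+\{W_i^m,E_j^m\}$ closes on the $E$'s with the same structure constants. Your additional weight-grading argument, which pins down in advance that only $h^m_{i+j-(n-m+2)}$ can appear on the right-hand side, is a useful refinement that the paper states only as a side remark, but it does not alter the overall approach.
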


Note that for the cases $n=2$ and $n=3$ this algebra is commutative.

\begin{proof} We have
\begin{gather}
\big\{ h_{i}^{m},h_{j}^{m}\big\} =\big\{ E_{i}^{m},E_{j}^{m}\big\} +\big\{ E_{i}^{m},W_{j}^{m}\big\} +\big\{ W_{i}^{m},E_{j}^{m}\big\} +\big\{ W_{i}^{m},W_{j}^{m}\big\}, \label{ziuta}
\end{gather}
where the f\/irst term is zero as the St\"{a}ckel Hamiltonians $E_{i}^{m}$ pairwise commute. By a direct calculation one can show that the functions~$W_{i}^{m}$ themselves constitute a Poisson algebra with the same structure constants as in~(\ref{str}). Finally, a calculation involving the formulas~(\ref{Gr}) and~(\ref{Kilq}) shows that for $i,j=2,\ldots,n$
\begin{gather}
\big\{ E_{i}^{m},W_{j}^{m}\big\} +\big\{ W_{i}^{m},E_{j}^{m}\big\}
=
\begin{cases}
0, & \text{for }i\in I_{1}^{m}\text{ and }j\in I_{2}^{m},\\
(j-i)E_{i+j-(n-m+2)}^{m}, & \text{for }i,j\in I_{1}^{m},\\
-(j-i)E_{i+j-(n-m+2)}^{m}, & \text{for }i,j\in I_{2}^{m}.
\end{cases} \label{szcz}
\end{gather}
In consequence, (\ref{ziuta}) yields the right-hand side of (\ref{str}).
\end{proof}

Due to the form of~(\ref{str}) the algebra $\mathfrak{g}$ splits into a direct sum of two algebras $\mathfrak{g}=\mathfrak{g}_{1}\oplus \mathfrak{g}_{2}$ where $\mathfrak{g}_{1}=\operatorname{Span}\{ h_{i}\colon i\in I_{1}^{m} \} $ and $\mathfrak{g}_{2}=\operatorname{Span}\{ h_{i}\colon i\in I_{2}^{m}\}$.

\begin{Remark}The functions $h_{1}^{m},\ldots,h_{n}^{m},W_{2}^{m},\ldots,W_{n}^{m}$ constitute a non-commutative integrable system \cite{bolsinov2003, mf1978}.
\end{Remark}

\subsection{Quasi-separation relations}

Let us consider the following linear relations (cf.~(\ref{1}))
\begin{gather}
{\displaystyle\sum\limits_{j=1}^{n}}
\Phi_{ij}\left( \lambda_{i}\right) h_{j}=\frac{1}{2}f_{i}(\lambda_{i})\mu_{i}^{2}+\sum_{k=1}^{n}u_{ik}(\lambda)\mu_{k} , \qquad
i=1,\ldots,n,\label{qS}
\end{gather}
with a given St\"{a}ckel matrix $\Phi$, with $n$ arbitrary functions $f_{i}$ of one variable and with a~set of~$n^{2}$ functions $u_{ik}(\lambda)$ depending in general on all~$\lambda_{j}$. Similar relations were previously studied in~\cite{marikhin2014, marikhin2005}. Solving~(\ref{qS}) with respect to $h_{i}$ we obtain
\begin{gather*}
h_{r}=E_{r}+W_{r}, \qquad r=1,\ldots,n,
\end{gather*}
such that $E_{r}$ are the geodesic St\"{a}ckel Hamiltonians~(\ref{1a}), so they are in mutual involution, and $W_{r}=p^{T}Y_{r}$ are some linear in momenta terms (magnetic terms).

Assuming, additionally, that $(i)$ $W_{1}=0$ and $(ii)$ $\{ h_{1},h_{j}\} =0$ for $j=2,\ldots,n $ we obtain that $Y_{r}$ are Killing vectors of the metric~$G$ in $E_{1}$. Thus, we obtain that $E_{1},\ldots,E_{n},W_{2},\ldots,W_{n}$ constitute a maximally superintegrable and separable system. In such a case the relations~(\ref{qS}) will be called \emph{quasi-separation relations}.

\begin{Theorem}
The maximally superintegrable system \eqref{system} is generated by the following quasi-separation relations:
\begin{gather}
\sum_{j=1}^{n}\lambda_{i}^{n-j}h_{j}=\frac{1}{2}\lambda_{i}^{m}\mu_{i}^{2}+\sum_{k=1}^{n}u_{ik}(\lambda)\mu_{k},\qquad i=1,\ldots,n,\label{ola}
\end{gather}
where
\begin{gather}
\sum_{k=1}^{n}u_{ik}(\lambda)\mu_{k}=
\begin{cases}
\displaystyle -\sum_{k\neq i}\frac{\mu_{i}-\mu_{k}}{\lambda_{i}-\lambda_{k}}, & \text{for } m=0,\\
\displaystyle
-\lambda_{i}^{m-1}\sum_{k\neq i}\frac{\lambda_{i}\mu_{i}-\lambda_{k}\mu_{k}}{\lambda_{i}-\lambda_{k}}+(m-1)\lambda_{i}^{m-1}\mu_{i},
& \text{for } m=1,\ldots,n,\\
\displaystyle -\lambda_{i}^{n-1}\sum_{k\neq i}\frac{\lambda_{i}^{2}\mu_{i}-\lambda_{k}^{2}\mu_{k}}{\lambda_{i}-\lambda_{k}}+(n-1)\lambda_{i}^{n}\mu_{i}, &
\text{for }m=n+1.
\end{cases} \label{basia}
\end{gather}
\end{Theorem}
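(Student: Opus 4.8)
The plan is to solve the linear system \eqref{ola} explicitly and check that its solution reproduces the Hamiltonians $h_r^{m}=E_r^{m}+W_r^{m}$ of \eqref{h}. The first observation is that the matrix on the left of \eqref{ola} is exactly the Benenti St\"{a}ckel matrix $\Phi_{ij}=\lambda_i^{\,n-j}$, and that the purely quadratic part $\tfrac12\lambda_i^{m}\mu_i^{2}$ of the right-hand side coincides with the right-hand side of the separation relations \eqref{Ben} with $f_i(\lambda_i)=\lambda_i^{m}$. Hence, by the construction recalled in Section~\ref{section2}, the quadratic-in-momenta parts of the solution $h_j$ are automatically the geodesic St\"{a}ckel Hamiltonians $E_j^{m}$ of \eqref{erm}. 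Since $\Phi$ is of Vandermonde type and therefore invertible, the whole solution is uniquely determined, so it only remains to identify the linear-in-momenta parts $W_j$ and to show that $W_j=W_j^{m}$.

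Because the inverse of a Vandermonde matrix acts linearly, it is cleanest to verify the equivalent \emph{forward} statement: substitute the already known vectors $W_j^{m}$ into the left-hand side and check that $\sum_{j=1}^{n}\lambda_i^{\,n-j}W_j^{m}$ equals the prescribed linear expression in \eqref{basia}. Writing $W_r^{m}=\sum_{k}\mu_k\,(Y_r^{m})^{k}$ with the explicit Killing-vector components \eqref{Y1}, \eqref{Y2}, and recalling that $W_1^{m}=0$, this reduces to the scalar identity
\[
\sum_{r=2}^{n}\lambda_i^{\,n-r}\,(Y_r^{m})^{k}=u_{ik}(\lambda),\qquad i,k=1,\dots,n,
\]
where $u_{ik}$ denotes the coefficient of $\mu_k$ read off from the relevant branch of \eqref{basia}. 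I would organise this verification along the direct-sum splitting $\mathfrak{g}=\mathfrak{g}_1\oplus\mathfrak{g}_2$: the contributions coming from $r\in I_1^{m}$ (via \eqref{Y1}) and from $r\in I_2^{m}$ (via \eqref{Y2}) are treated separately, and the reflection symmetry between the two index blocks lets one deduce the $I_2^{m}$-case from the $I_1^{m}$-case.

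The heart of the computation is the collapse of these sums. Each component $(Y_r^{m})^{k}$ carries the factor $1/\Delta_k$ together with a numerator built from the elementary symmetric polynomials $\sigma_{r-l-1}$ (resp.\ $\sigma_{r+l-1}$) multiplied by powers of $\lambda_k$, where $l$ is the inner summation index of \eqref{Y1}, \eqref{Y2}. Summing against the Vandermonde weights $\lambda_i^{\,n-r}$ and using the Lagrange-interpolation identity $\prod_{j\neq k}(x-\lambda_j)=\sum_{s=0}^{n-1}(-1)^{s}\sigma_s^{(k)}x^{\,n-1-s}$, where $\sigma_s^{(k)}$ is the $s$-th elementary symmetric polynomial in $\{\lambda_j\colon j\neq k\}$, the factor $\prod_{j\neq i,k}(\lambda_i-\lambda_j)$ produced after resummation cancels all but one factor of $\Delta_k$. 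What survives is precisely the difference-quotient form $\lambda_i^{\,m-1}\lambda_k/(\lambda_i-\lambda_k)$ off the diagonal, together with the diagonal $\mu_i$-term $(m-1)\lambda_i^{\,m-1}\mu_i$, that is, the right-hand side of \eqref{basia}.

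The main obstacle I anticipate is the symmetric-function bookkeeping behind this cancellation, and carrying it out uniformly over the three regimes $m=0$, $1\le m\le n$, and $m=n+1$: the extreme values differ both in the structure of the linear term (a bare $\mu_k$ for $m=0$, a $\lambda_k^{2}\mu_k$ factor for $m=n+1$) and in possessing $n$ rather than $n-1$ additional integrals, so the index ranges of \eqref{W1}, \eqref{W2} must be tracked carefully. I would first fix the signs and index conventions on the cases $n=2,3$, where the sums are short enough to exhibit the cancellation by hand, and then push the generating-function identity through in general, which reduces the whole theorem to elementary algebra of elementary symmetric polynomials.
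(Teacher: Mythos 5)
The paper states this theorem without any proof (only the remark following it, that solving \eqref{ola}, \eqref{basia} returns the $E_{r}^{m}$ and the $W_{r}^{m}$), so there is no argument of the authors to compare yours against line by line. Your plan is the natural one and its logical skeleton is correct: by linearity of the system \eqref{ola} and invertibility of the Vandermonde matrix $\Phi_{ij}=\lambda_{i}^{n-j}$, the quadratic parts of the unique solution are forced to be the $E_{j}^{m}$ of \eqref{erm} (this is just \eqref{Ben} with $f_{i}=\lambda_{i}^{m}$), and the theorem reduces to the forward scalar identity $\sum_{r}\lambda_{i}^{n-r}\bigl(Y_{r}^{m}\bigr)^{k}=u_{ik}(\lambda)$ with $\bigl(Y_{r}^{m}\bigr)^{k}$ taken from \eqref{Y1}, \eqref{Y2}. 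Splitting according to $r\in I_{1}^{m}$ versus $r\in I_{2}^{m}$ and exploiting the reflection between the two blocks is also the right organisation.

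What you have not done is the proof itself: the symmetric-function identity is asserted, not established, and it is the entire content of the theorem. Two concrete cautions. First, after substituting \eqref{Y1} and reindexing by $s=r-l-1$, the inner sums you obtain are \emph{truncated} sums $\sum_{s=0}^{N}(-1)^{s}\sigma_{s}\lambda_{i}^{\,n-1-s}$ with $N<n$ depending on $m$ and $l$; these do not collapse by a single application of $\prod_{j}(x-\lambda_{j})=\sum_{s}(-1)^{s}\sigma_{s}x^{\,n-s}$ evaluated at $x=\lambda_{i}$. You will need the deleted-variable relation $\sigma_{s}=\sigma_{s}^{(k)}+\lambda_{k}\sigma_{s-1}^{(k)}$ (equivalently $\sigma_{s}^{(k)}=\sum_{t=0}^{s}(-\lambda_{k})^{t}\sigma_{s-t}$) to telescope the resulting geometric-type sums in $\lambda_{k}/\lambda_{i}$ before the factor $\Delta_{k}$ cancels; also note that the diagonal coefficient of $\mu_{i}$ in \eqref{basia} is not just $(m-1)\lambda_{i}^{m-1}$ but includes the $k\neq i$ sum evaluated with $\mu_{k}=\mu_{i}$, $\lambda_{k}\to\lambda_{k}$, i.e., $-\lambda_{i}^{m}\sum_{k\neq i}(\lambda_{i}-\lambda_{k})^{-1}$. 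Second, the boundary cases need explicit bookkeeping: for $m=0$ the index set $I_{1}^{0}$ formally contains $r=n+1$, whose $W_{n+1}^{0}$ does not enter \eqref{system} and must not enter the sum $\sum_{j=1}^{n}\lambda_{i}^{n-j}h_{j}$, and for $m=n+1$ the set $I_{2}^{n+1}$ contains $r=1$ while $h_{1}^{m}=E_{1}^{m}$ carries no linear term, so you must check that the relations \eqref{ola} indeed produce $W_{1}=0$ there. None of this threatens the approach, but until the cancellation is actually carried through (your $n=2,3$ checks plus an induction or generating-function argument in $n$), the proposal is a programme rather than a proof.
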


Thus, by solving (\ref{ola}), (\ref{basia}) with respect to $h_{j}$ we obtain all the Hamiltonians~(\ref{erm}) as well as the additional constants
$W_{i}^{m}$ (\ref{W1}), (\ref{W2}) of the Hamiltonian $E_{1}^{m}$.

\section{Non-homogeneous hydrodynamic equations of Killing type}\label{section4}

In this section we prove a theorem describing the relation between the systems presented in Section~\ref{section3} and non-homogeneous hydrodynamic equations. Consider the set of Hamiltonians on~$\mathbf{R}^{2n}$
\begin{gather}
h_{r}=E_{r}+W_{r}, \qquad E_{r}=\tfrac{1}{2}p^{T}A_{r}p,\qquad W_{r}=p^{T}Y_{r},\qquad r=1,\dots ,n,\label{4.1}
\end{gather}
with
\begin{gather*}
\{E_{r},E_{s}\}=0 , \qquad r,s=1,\ldots,n, 
\end{gather*}
where $A_{r}=K_{r}G$ with $L_{Y_{r}}G=0$ (i.e., all $Y_{r}$ are Killing vectors for the metric~$G$). Suppose also that the functions~$W_{r}$ constitute a~non-commutative Poisson algebra with some structure constants~$c_{rs}^{i}$
\begin{gather}
\{W_{r},W_{s}\}=\sum_{i=1}^{n}c_{rs}^{i}W_{i},\label{a}
\end{gather}
and moreover that the following condition holds
\begin{gather}
\{E_{r},W_{s}\}+\{W_{r},E_{s}\}=\sum_{i=1}^{n}c_{rs}^{i}E_{i}.\label{A}
\end{gather}
The conditions (\ref{a}) and (\ref{A}) imply that $h_{r}$ also constitute a~non-commutative Poisson algebra with the same structure constants~$c_{rs}^{i}$
\begin{gather}
\{h_{r},h_{s}\}=\sum_{i=1}^{n}c_{rs}^{i}h_{i}.\label{alg}
\end{gather}

In order to relate the algebra (\ref{alg}) with an appropriate algebra of hydrodynamic vector f\/ields we will use the link between the canonical Poisson bracket and the Schouten bracket between symmetric contravariant tensors. Actually, for a pair of functions on~$M$
\begin{gather*}
F_{K}=\frac{1}{k!}K^{i_{1}\dots i_{k}}(q)p_{i_{1}}\cdots p_{i_{k}},\qquad F_{R}=\frac{1}{r!}R^{i_{1}\dots i_{r}}(q)p_{i_{1}}\cdots p_{i_{r}}
\end{gather*}
(we use for the moment the Einstein summation convention) the following relation holds~\cite{Dolan}
\begin{gather}
\{F_{K},F_{R}\}=-([K,R]_{S})^{i_{1}\dots i_{k+r-1}}p_{i_{1}}\cdots p_{i_{k+r-1}},\label{relacja}
\end{gather}
where
\begin{gather}
([K,R]_{S})^{l_{1}\dots l_{k+r-1}}=\frac{1}{k!r!}\big[ kK^{i(l_{1}\dots }\partial_{n}R^{\dots l_{k+r-1})}-rR^{i(l_{1}\dots }\partial_{i}K^{\dots l_{k+r-1})}\big] ,\qquad \partial_{i}=\frac{\partial}{\partial q_{i}}\label{Schouten}
\end{gather}
def\/ines the Schouten bracket on $\mathbf{R}^{n}$ (the conf\/iguration space) and where $(\dots )$ is the symmetrization operation over the indices.

Formula (\ref{relacja}) implies that the vector f\/ields $Y_{r}$ constitute a~non-abelian Lie algebra with the structure constants $-c_{rs}^{i}$
\begin{gather}
[ Y_{r},Y_{s}] =-\sum_{i=1}^{n}c_{rs}^{i}Y_{i}\label{4.1c}
\end{gather}
and that
\begin{gather}
[ A_{r},Y_{s}] _{S}+[Y_{r},A_{s}]_{S}=-\sum_{i=1}^{n}c_{rs}^{i}A_{i}.\label{4.3}
\end{gather}
From the properties of the Schouten bracket it follows that $[Y_{r},A_{s}]_{S}=L_{Y_{r}}A_{s}$. Since moreover $L_{Y_{s}}A_{r}=( L_{Y_{s}}K_{r}) G$, as $Y_{s}$ are Killing vectors of a non-singular metric~$G$, we obtain from~(\ref{4.3})
\begin{gather}
L_{Y_{r}}K_{s}-L_{Y_{s}}K_{r}=-\sum_{i=1}^{n}c_{rs}^{i}K_{i}.\label{4.4}
\end{gather}

\begin{Theorem}\label{jedyne}Consider the non-homogeneous hydrodynamic systems
\begin{gather}
q_{t_{r}}=K_{r}q_{x}+Y_{r}\equiv Z_{r}\qquad r=2,\ldots,n,\label{H}
\end{gather}
where $K_{r}$ and $Y_{r}$ are such that the conditions \eqref{4.1}--\eqref{A} are satisfied. Then
\begin{gather*}
[ Z_{r},Z_{s}] =-\sum_{i=1}^{n}c_{rs}^{i}Z_{i}.
\end{gather*}
\end{Theorem}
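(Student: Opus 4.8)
The plan is to treat each $Z_r=K_rq_x+Y_r$ as an evolutionary (differential-function) vector field on the infinite-dimensional space of functions $q(x)$ and to compute the commutator $[Z_r,Z_s]$ through the standard prolongation formula, i.e., via Fréchet derivatives with total $x$-derivatives inserted wherever the prolongation meets $q_x$. Since this commutator is additive in each argument, I would first substitute $Z_r=K_rq_x+Y_r$ and expand
\begin{gather*}
[Z_r,Z_s]=[K_rq_x,K_sq_x]+[K_rq_x,Y_s]+[Y_r,K_sq_x]+[Y_r,Y_s],
\end{gather*}
and then evaluate the four pieces separately, fixing the bracket convention to agree with the one already used in \eqref{4.1c} (the Jacobi--Lie bracket of configuration-space vector fields).

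The two \emph{pure} pieces are immediate. The field $Y_r$ carries no $x$-derivatives, so $[Y_r,Y_s]$ reduces to the ordinary Lie bracket of the vector fields $Y_r,Y_s$ on the configuration space, which by \eqref{4.1c} equals $-\sum_i c_{rs}^iY_i$. The homogeneous piece $[K_rq_x,K_sq_x]$ vanishes: this is exactly the commutativity of the homogeneous hydrodynamic flows generated by commuting geodesic Hamiltonians, which follows from $\{E_r,E_s\}=0$ in \eqref{4.1} (equivalently $[A_r,A_s]_S=0$ via \eqref{relacja}) as recalled in Section~\ref{section2}.

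The heart of the argument is the pair of cross terms. Applying the prolongation formula to $[K_rq_x,Y_s]$, the Fréchet derivative of $Y_s$ along $K_rq_x$ and of $K_rq_x$ along $Y_s$ combine, after relabelling the index contracted with $q_x$, into precisely the Lie derivative of the $(1,1)$-tensor $K_r$ along $Y_s$, giving $[K_rq_x,Y_s]=-(L_{Y_s}K_r)q_x$ and, symmetrically, $[Y_r,K_sq_x]=(L_{Y_r}K_s)q_x$. Summing and invoking \eqref{4.4} then yields
\begin{gather*}
[K_rq_x,Y_s]+[Y_r,K_sq_x]=\big(L_{Y_r}K_s-L_{Y_s}K_r\big)q_x=-\sum_{i=1}^n c_{rs}^iK_iq_x.
\end{gather*}
Collecting the three nonzero contributions and using $K_1=I$, $Y_1=0$ (so $Z_1=q_x$), I obtain
\begin{gather*}
[Z_r,Z_s]=-\sum_{i=1}^n c_{rs}^iK_iq_x-\sum_{i=1}^n c_{rs}^iY_i=-\sum_{i=1}^n c_{rs}^i(K_iq_x+Y_i)=-\sum_{i=1}^n c_{rs}^iZ_i,
\end{gather*}
as claimed.

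The main obstacle I anticipate is the bookkeeping in the cross-term computation: one must verify that the prolongation produces no surviving $q_{xx}$ (second-derivative) contributions and that the remaining terms reassemble as $L_{Y_s}K_r$ with the sign consistent with \eqref{4.1c}. The $q_{xx}$ terms are in fact absent from the cross brackets because $Y_s$ contains no $x$-derivatives, whereas in the homogeneous bracket they cancel thanks to the matrices $K_r$ commuting (they are simultaneously diagonal in the Riemann invariants). The conceptual point to keep straight is that \eqref{4.4}, itself obtained from \eqref{4.3} and the Schouten--Poisson correspondence \eqref{relacja}, is exactly the tensorial identity needed to convert the Lie-derivative combination into the structure constants, so the whole argument rests on the compatibility of that correspondence with the prolongation formula.
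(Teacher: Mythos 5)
Your proposal is correct and follows essentially the same route as the paper: the same bilinear expansion into four brackets, the vanishing of the homogeneous piece by commutativity of the St\"ackel flows, the identification of the cross terms with $(L_{Y_r}K_s-L_{Y_s}K_r)q_x$, and the final appeal to \eqref{4.1c} and \eqref{4.4}. The only difference is that you spell out the prolongation/Fr\'echet-derivative bookkeeping that the paper compresses into the remark ``as $[Y_s,q_x]=0$ we have $[Y_s,K_rq_x]=(L_{Y_s}K_r)q_x$''.
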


\begin{proof}We have
\begin{gather*}
[Z_{r},Z_{s} ] =[ K_{r}q_{x}+Y_{r},K_{s}q_{x}+Y_{s}]=[ K_{r}q_{x},K_{s}q_{x}] +[ K_{r}q_{x},Y_{s}]+[ Y_{r},K_{s}q_{x}] +[ Y_{r},Y_{s}].
\end{gather*}
To begin with, $\left[ K_{r}q_{x},K_{s}q_{x}\right] =0$ for all $r$, $s$ (see for example~\cite{Zenia1997}). Further, as $[Y_{s},q_{x}]=0$ we have $[Y_{s},K_{r}q_{x}] = ( L_{Y_{s}}K_{r} ) q_{x}$. Using~(\ref{4.1c}) and~(\ref{4.4}) we obtain
\begin{gather*}
[ Z_{r},Z_{s}] =( L_{Y_{r}}K_{s}-L_{Y_{s}}K_{r})q_{x}-\sum_{i=1}^{n}c_{rs}^{i}Y_{i}=-\sum_{i=1}^{n}c_{rs}^{i}(K_{i}q_{x}+Y_{i}) =-\sum_{i=1}^{n}c_{rs}^{i}Z_{i}.\tag*{\qed}
\end{gather*}\renewcommand{\qed}{}
\end{proof}

Thus, the vector f\/ields $Z_{r}$ in (\ref{H}) constitute a Lie algebra with up to a sign the same structure constants as the Poisson algebra~(\ref{alg}). Besides, since our systems~(\ref{h}) with structure constants~(\ref{str}) satisfy the conditions (\ref{4.1})--(\ref{A}) (as~(\ref{szcz}) is a~specif\/ication of~(\ref{A})) we see that they possess non-homogeneous hydrodynamic counterparts, with the same structure constants.

\begin{Example}Let us consider the case $n=4$ in Vi\`{e}te coordinates $q$. As $m=0,\ldots,n+1$ we have then $n+2=6$ dif\/ferent non-homogeneous hydrodynamic systems~(\ref{H}):
\begin{gather*}
\left[
\begin{matrix}
q_{1}\\
q_{2}\\
q_{3}\\
q_{4}
\end{matrix}
\right] _{t_{2}}=\left[
\begin{matrix}
0 & 1 & 0 & 0\\
-q_{2} & q_{1} & 1 & 0\\
-q_{3} & 0 & q_{1} & 1\\
-q_{4} & 0 & 0 & q_{1}
\end{matrix}
\right] \left[
\begin{matrix}
q_{1}\\
q_{2}\\
q_{3}\\
q_{4}
\end{matrix}
\right] _{x}+Y_{2}\equiv Z_{2},
\\
\left[
\begin{matrix}
q_{1}\\
q_{2}\\
q_{3}\\
q_{4}
\end{matrix}
\right] _{t_{3}}=\left[
\begin{matrix}
0 & 0 & 1 & 0\\
-q_{3} & 0 & q_{1} & 1\\
-q_{4} & -q_{3} & q_{2} & q_{1}\\
-0 & -q_{4} & 0 & q_{2}
\end{matrix}
\right] \left[
\begin{matrix}
q_{1}\\
q_{2}\\
q_{3}\\
q_{4}
\end{matrix}
\right] _{x}+Y_{3}\equiv Z_{3},
\\
\left[
\begin{matrix}
q_{1}\\
q_{2}\\
q_{3}\\
q_{4}
\end{matrix}
\right] _{t_{4}}=\left[
\begin{matrix}
0 & 0 & 0 & 1\\
-q_{4} & 0 & 0 & q_{1}\\
0 & -q_{4} & 0 & q_{2}\\
0 & 0 & -q_{4} & q_{3}
\end{matrix}
\right] \left[
\begin{matrix}
q_{1}\\
q_{2}\\
q_{3}\\
q_{4}
\end{matrix}
\right] _{x}+Y_{4}\equiv Z_{4},
\end{gather*}
where
\begin{alignat*}{3}
& \text{for }m =0\colon \quad && Y_{2}=\left( 0,0,0,1\right) ^{T},\qquad Y_{3}
=(0,0,2,q_{1})^{T},\qquad Y_{4}=(0,3,2q_{1},q_{2})^{T}, &\\
& \text{for }m =1\colon \quad && Y_{2}=(0,0,1,0)^{T},\qquad Y_{3}=(0,2,q_{1},0)^{T},\qquad Y_{4}=(3,2q_{1},q_{2},0)^{T}, & \\
& \text{for }m =2\colon \quad && Y_{2}=(0,1,0,0)^{T},\qquad Y_{3}=(2,q_{1},0,0)^{T},\qquad Y_{4}=(0,0,0,q_{4})^{T}, &\\
& \text{for }m =3\colon \quad && Y_{2}=(1,0,0,0)^{T},\qquad Y_{3}=(0,0,q_{3},2q_{4}),\qquad Y_{4}=(0,0,q_{4},0)^{T}, & \\
& \text{for }m =4\colon \quad && Y_{2}=(0,q_{2},2q_{3},3q_{4})^{T},\qquad Y_{3}=(0,q_{3},2q_{4},0)^{T},\qquad Y_{4}=(0,q_{4},0,0)^{T}, &\\
& \text{for }m =5\colon \quad && Y_{2}=(q_{2},2q_{3},3q_{4},0)^{T},\qquad Y_{3}=(q_{3},2q_{4},0,0)^{T},\qquad Y_{4}=(q_{4},0,0,0)^{T},&
\end{alignat*}
which constitute appropriate Lie algebras with the following nonzero elements given by Theorem~\ref{jedyne} and~(\ref{str})
\begin{alignat*}{3}
& \text{for }m=0\colon \quad && [Z_{3},Z_{4}]=-Z_{1},& \\
& \text{for }m=1\colon \quad && [Z_{2},Z_{4}]=-2Z_{1},\qquad [Z_{3},Z_{4}]=-Z_{2},&\\
&\text{for }m=2\colon \quad && [Z_{2},Z_{3}]=-Z_{1},&\\
&\text{for }m=3\colon \quad && [Z_{3},Z_{4}]=Z_{4},&\\
&\text{for }m=4\colon \quad && [Z_{2},Z_{3}]=Z_{3},\qquad [Z_{2},Z_{4}]=2Z_{4},&\\
&\text{for }m=5\colon \quad && [Z_{2},Z_{3}]=Z_{3},&
\end{alignat*}
where $Z_{1}=(q_{1},q_{2},q_{3},q_{4})_{x}^{T}$. Using (\ref{Kill}), (\ref{Y1}) and (\ref{Y2}) the present example can be easily calculated in the Riemman invariants~$\lambda$.
\end{Example}

\section{St\"{a}ckel transform and new non-homogeneous hydrodynamic Killing systems}\label{section5}

St\"{a}ckel transform is a functional transform that maps a Liouville integrable system into a new integrable system, and in particular it maps a~St\"{a}ckel system into a new St\"{a}ckel systems \cite{maciej2,Boyer, H, Artur}, which explains its name. In~\cite{maciej3} the authors considered the action of St\"{a}ckel transform on superintegrable systems in such a way that it preserves superintegrability. It was found that only particular one-parameter St\"{a}ckel transforms preserve superintegrability. Here we demonstrate that St\"{a}ckel transforms are also applicable for our particular systems~(\ref{h}) def\/ined by the quasi-separation relations (\ref{ola}). Nevertheless, if we demand that the transformed system shall also constitute a~Poisson algebra, then the number of admissible St\"{a}ckel transforms becomes very limited.

Consider thus Hamiltonians (\ref{h}) extended by some potentials $V_{r}^{(k)}$
\begin{gather}
h_{1}^{m}=E_{1}^{m}+\alpha V_{1}^{(k)}\equiv H_{1}^{m}, \nonumber\\
h_{r}^{m}=E_{r}^{m}+\alpha V_{r}^{(k)}+W_{r}^{m}\equiv H_{r}^{m}+W_{r}^{m}, \qquad r=2,\ldots,n,\label{maciej}
\end{gather}
where $\alpha$ is a parameter, and where the potentials $V_{r}^{(k)}$, $k\in\mathbf{Z}$, are def\/ined by the following separation relations
\begin{gather}
\lambda_{i}^{k}+\sum_{j=1}^{n}\lambda_{i}^{n-j}V_{j}^{(k)}=0,\qquad i=1,\ldots,n.\label{pot}
\end{gather}
As such, the potentials $V_{r}^{(k)}$ are called basic separable potentials. Thus, the functions $h_{r}^{m}$ in~(\ref{maciej}) are generated by the
following quasi-separation relations
\begin{gather}
\alpha\lambda_{i}^{k}+\sum_{j=1}^{n}\lambda_{i}^{n-j}h_{j}=\frac{1}{2}\lambda_{i}^{m}\mu_{i}^{2}+\sum_{k=1}^{n}u_{ik}(\lambda)\mu_{k},\qquad i=1,\ldots,n,\label{qs}
\end{gather}
with $u_{ik}$ as in (\ref{basia}), i.e., the term $\lambda_{i}^{k}$ in~(\ref{qs}) generates the potential $V_{r}^{(k)}$ in~(\ref{maciej}).

\begin{Lemma}\label{111}The functions $H_{1}^{m},\ldots,H_{n}^{m},W_{2}^{m},\ldots,W_{n}^{m}$ constitute a maximally superintegrable $($with respect to $H_{1}^{m})$ system only in the following four cases: $(m,k)=(0,n),(n,n),(1,-1)$ and $(n+1,-1)$.
\end{Lemma}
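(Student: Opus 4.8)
The plan is to peel off the parts of the superintegrability requirement that hold automatically and isolate the one genuine condition, then test it against the explicit potentials. First I would note that the Hamiltonians $H_{1}^{m},\dots,H_{n}^{m}$ remain in involution for every $(m,k)$: writing
\[
\big\{H_{i}^{m},H_{j}^{m}\big\}=\big\{E_{i}^{m},E_{j}^{m}\big\}+\alpha\big(\big\{E_{i}^{m},V_{j}^{(k)}\big\}+\big\{V_{i}^{(k)},E_{j}^{m}\big\}\big)+\alpha^{2}\big\{V_{i}^{(k)},V_{j}^{(k)}\big\},
\]
the first term vanishes because the $E_{i}^{m}$ are St\"{a}ckel Hamiltonians, the $\alpha^{2}$-term vanishes since the $V_{j}^{(k)}$ depend only on $q$, and the cross term vanishes because the $V_{j}^{(k)}$ are basic separable potentials for the same St\"{a}ckel web, i.e.\ are defined by~(\ref{pot}). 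Functional independence of the $2n-1$ functions carries over from the unextended system~(\ref{system}), whose superintegrability was already established, since the corrections $\alpha V_{i}^{(k)}(q)$ carry no momentum differentials and thus do not affect the rank argument in the $p$-directions. Hence the \emph{only} obstruction to maximal superintegrability with respect to $H_{1}^{m}$ is that each $W_{r}^{m}$ stay a constant of motion of $H_{1}^{m}$, that is $\{H_{1}^{m},W_{r}^{m}\}=0$ for $r=2,\dots,n$.

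Next I would evaluate this bracket. Since $\{E_{1}^{m},W_{r}^{m}\}=0$ by Lemma~\ref{W}, and since $W_{r}^{m}=p^{T}Y_{r}^{m}$ is linear in the momenta while $V_{1}^{(k)}=V_{1}^{(k)}(q)$, the bracket reduces to a Lie derivative,
\[
\big\{H_{1}^{m},W_{r}^{m}\big\}=\alpha\big\{V_{1}^{(k)},W_{r}^{m}\big\}=\alpha\, Y_{r}^{m}\big(V_{1}^{(k)}\big).
\]
Thus the whole question becomes: for which $(m,k)$ is the first basic potential $V_{1}^{(k)}$ annihilated by \emph{every} Killing vector $Y_{r}^{m}$? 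To make $V_{1}^{(k)}$ explicit I would rewrite~(\ref{pot}) as the statement that the polynomial $\lambda^{k}+\sum_{j}V_{j}^{(k)}\lambda^{n-j}$ vanishes at each node $\lambda_{i}$, hence is divisible by $\Delta(\lambda)=\prod_{i}(\lambda-\lambda_{i})=\sum_{l=0}^{n}q_{l}\lambda^{n-l}$ (with $q_{0}=1$). For $k\in\{0,\dots,n-1\}$ this forces $V_{1}^{(k)}$ to be constant, a dynamically trivial potential generating no genuine St\"{a}ckel transform, so these values are discarded. The two simplest nontrivial cases are $k=n$, for which the remainder gives $V_{1}^{(n)}=q_{1}$, and $k=-1$, for which the analogous computation gives $V_{1}^{(-1)}=1/q_{n}$; these are the only $k$ for which $V_{1}^{(k)}$ is a function of a single Vi\`{e}te coordinate.

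With these candidates fixed, I would read the $q$-components of $Y_{r}^{m}$ directly off~(\ref{W1}) and~(\ref{W2}) through $W_{r}^{m}=\sum_{i}p_{i}(Y_{r}^{m})^{i}$ and compute $Y_{r}^{m}(q_{1})=(Y_{r}^{m})^{1}$ and $Y_{r}^{m}(q_{n})=(Y_{r}^{m})^{n}$. A short index count on~(\ref{W1})--(\ref{W2}) shows that the $p_{1}$-components of all the $Y_{r}^{m}$ vanish simultaneously precisely when $m\in\{0,n\}$, and that the $p_{n}$-components vanish simultaneously precisely when $m\in\{1,n+1\}$. Since $Y_{r}^{m}(1/q_{n})=-q_{n}^{-2}Y_{r}^{m}(q_{n})$, this gives $Y_{r}^{m}(V_{1}^{(n)})=0$ for all $r$ iff $m\in\{0,n\}$ and $Y_{r}^{m}(V_{1}^{(-1)})=0$ for all $r$ iff $m\in\{1,n+1\}$, producing exactly the four admissible pairs $(0,n),(n,n),(1,-1),(n+1,-1)$.

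The remaining and harder part is the converse: ruling out every other nontrivial $k$ for all $m$. The divisibility description is the lever here. For $k\ge n+1$ the potential $V_{1}^{(k)}$ is the $\lambda^{n-1}$-coefficient of $\lambda^{k}\bmod\Delta(\lambda)$ and is a genuine polynomial of weight $\ge2$ that involves $q_{2}$ (for example $V_{1}^{(n+1)}=q_{2}-q_{1}^{2}$); dually, for $k\le-2$ it is a rational function whose dependence on $q_{n-1}$ is nonzero and whose pole in $q_{n}$ has order $\ge2$ (for example $V_{1}^{(-2)}=-q_{n-1}/q_{n}^{2}$). I would then show that the two annihilation windows $m\in\{0,n\}$ and $m\in\{1,n+1\}$, which are forced by the $q_{1}$- resp.\ $q_{n}$-content of $V_{1}^{(k)}$, never simultaneously kill this extra coordinate dependence: e.g.\ for $m\in\{0,n\}$ one has $Y_{r}^{m}(q_{2})\neq0$ for some $r$, so $Y_{r}^{m}(V_{1}^{(n+1)})\neq0$, while every other $m$ already fails on $q_{1}$. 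I expect this exclusion to be the main obstacle, since it requires uniform control of the leading Vi\`{e}te content of $V_{1}^{(k)}$ together with the index bookkeeping of~(\ref{W1})--(\ref{W2}); organizing it by the weight (order of pole) of $V_{1}^{(k)}$ and reducing to the behaviour on the pairs $(q_{1},q_{2})$ and $(q_{n},q_{n-1})$ should make the argument systematic.
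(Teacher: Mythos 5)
Your reduction is exactly the paper's: the $H_r^m$ commute because they are St\"ackel Hamiltonians, $\{E_1^m,W_r^m\}=0$ by Lemma~\ref{W}, so the whole question collapses to $L_{Y_r^m}\big(V_1^{(k)}\big)=0$ for all $r$. Where you go further is that the paper simply asserts this condition ``is satisfied only in the mentioned four cases,'' whereas you actually verify the ``if'' direction: your index count on (\ref{W1})--(\ref{W2}) showing $(Y_r^m)^1=0$ for all $r=2,\dots,n$ exactly when $m\in\{0,n\}$ and $(Y_r^m)^n=0$ exactly when $m\in\{1,n+1\}$ is correct (it checks against the $n=4$ tables in the paper), and combined with $V_1^{(n)}=q_1$, $V_1^{(-1)}=1/q_n$ it confirms the four admissible pairs. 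You also correctly flag a point the paper glosses over: for $k\in\{0,\dots,n-1\}$ one gets $V_j^{(k)}=-\delta_{j,n-k}$, so $V_1^{(k)}$ is constant and is annihilated by every $Y_r^m$; the lemma is literally true only if these trivial potentials are discarded, as you do. The one piece you leave as a sketch is the exhaustive ``only if'' over all remaining $k\in\mathbf{Z}$ (showing, e.g., that for $m\in\{0,n\}$ the $q_2$-content of $V_1^{(k)}$ for $k\ge n+1$, and dually the $q_{n-1}$-content for $k\le-2$, is never killed); your plan for this is plausible but not carried out---though the paper offers no argument for it at all, so your write-up is, if anything, more complete than the original.
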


\begin{proof}Since the functions $H_{r}^{m}$ are St\"{a}ckel Hamiltonians (albeit no longer geodesic), they commute: $\{ H_{r}^{m},H_{s}^{m}\} =0$ for any $r$, $s$, so $H_{1}^{m},\ldots,H_{n}^{m}$ constitute a Liouville integrable system. Moreover, $\{ E_{1}^{m},W_{r}^{m}\} =0$ for any $r$ due to Lemma~\ref{W}. Thus, in order to have $\{ H_{1}^{m},W_{r}^{m}\} =0$ for any $r$ we have to demand $\{ V_{1}^{(k)},W_{r}^{m}\} =0$ or, equivalently, $L_{Y_{r}^{m}}(V_{1}^{(k)})=0$, for all~$r$ which is satisf\/ied only in the mentioned four cases.
\end{proof}

Let us now turn to algebraic properties of the set of functions $h_{r}^{m}$ in~(\ref{maciej}). Since some of these functions can now commute to the constant~$\alpha$ we have to extend the set of function $h_{r}^{m}$ by the constant Hamiltonian $h_{0}=\alpha$ in order to turn it into a Poisson algebra.

\begin{Proposition}\label{prop}The functions $h_{0}^{m}=\alpha,h_{1}^{m},\ldots,h_{n}^{m}$ in~\eqref{maciej} for $(m,k)=(0,n),(n,n),(1,-1)$ and $(n+1,-1)$ constitute a~Poisson algebra with the commutation relations as in~\eqref{str}.
\end{Proposition}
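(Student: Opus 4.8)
The plan is to exploit the splitting $h_r^m=H_r^m+W_r^m$ with $H_r^m=E_r^m+\alpha V_r^{(k)}$ and to reduce the bracket to identities that are already available. Expanding bilinearly, for $i,j=2,\ldots,n$ one has
\[
\{h_i^m,h_j^m\}=\{H_i^m,H_j^m\}+\big(\{H_i^m,W_j^m\}+\{W_i^m,H_j^m\}\big)+\{W_i^m,W_j^m\},
\]
and the goal is to show that each of the three groups contributes, with the common coefficient $\pm(j-i)$, the component of index $i+j-(n-m+2)$ of $H$, respectively $W$.

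The two outer groups are immediate. The $H_r^m$ are St\"{a}ckel Hamiltonians with the separable potentials $V_r^{(k)}$ of \eqref{pot} and hence pairwise commute, $\{H_i^m,H_j^m\}=0$, exactly as recorded in the proof of Lemma~\ref{111}. For the third group, the computation carried out in the proof of the Theorem that established \eqref{str} already shows that the functions $W_r^m$ close on themselves with precisely those structure constants, $\{W_i^m,W_j^m\}=\pm(j-i)W^m_{i+j-(n-m+2)}$ (with $W_0^m=0$).

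The mixed group is the crux. Since $V_i^{(k)}$ depends only on $q$ while $W_j^m=p^{T}Y_j^m$, one has $\{V_i^{(k)},W_j^m\}=L_{Y_j^m}V_i^{(k)}$, so
\[
\{H_i^m,W_j^m\}+\{W_i^m,H_j^m\}=\big(\{E_i^m,W_j^m\}+\{W_i^m,E_j^m\}\big)+\alpha\big(L_{Y_j^m}V_i^{(k)}-L_{Y_i^m}V_j^{(k)}\big).
\]
The first bracket on the right is \eqref{szcz} and yields $\pm(j-i)E^m_{i+j-(n-m+2)}$. The remaining potential term is the step I expect to be the main obstacle: inserting the explicit Killing vectors \eqref{Y1}--\eqref{Y2} and the potentials determined by \eqref{pot}, one must verify
\[
L_{Y_j^m}V_i^{(k)}-L_{Y_i^m}V_j^{(k)}=\pm(j-i)\,V^{(k)}_{i+j-(n-m+2)}.
\]
It is precisely this closure that singles out the admissible pairs: for generic $(m,k)$ the left-hand side does not fall back onto the family $V^{(k)}$, and Lemma~\ref{111} already isolates the necessary special instance $L_{Y_r^m}V_1^{(k)}=0$, valid only for $(m,k)=(0,n),(n,n),(1,-1),(n+1,-1)$; for these four pairs the $V^{(k)}$ carry the same grading under the scaling $p_k\mapsto\varepsilon^k p_k$, $q_k\mapsto\varepsilon^{-k}q_k$ as the $E_r^m$ and $W_r^m$, which forces the index shift $i+j-(n-m+2)$.

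Assembling the three contributions gives $\{h_i^m,h_j^m\}=\pm(j-i)\big(E^m+\alpha V^{(k)}+W^m\big)_{i+j-(n-m+2)}=\pm(j-i)\,h^m_{i+j-(n-m+2)}$, which is \eqref{str}. The central generator then appears automatically: when the shifted index is $0$ one has $E_0^m=W_0^m=0$ while the potential degenerates to the constant $\alpha$ (transparently for $k=n$, where \eqref{pot} and \eqref{q} give $V^{(n)}_j=q_j$ and $V^{(n)}_0=q_0=1$), so the bracket reduces to $\pm(j-i)\alpha=\pm(j-i)h_0^m$. This is exactly why $h_0^m=\alpha$ must be adjoined for the bracket to take values in $\mathfrak{g}$.
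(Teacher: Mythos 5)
The paper states this Proposition without any proof, so there is no authors' argument to compare yours against; judged on its own terms, your decomposition is the natural (and essentially forced) one, and three of its four pieces are indeed already available in the paper: $\{H_i^m,H_j^m\}=0$ because the $H_r^m$ are St\"ackel Hamiltonians with basic separable potentials (as noted in the proof of Lemma~\ref{111}), the closure of $\{W_i^m,W_j^m\}$ with the structure constants of \eqref{str} is asserted in the proof of the theorem establishing \eqref{str}, and the mixed $E$--$W$ term is \eqref{szcz}. The entire new content of the Proposition therefore sits in the identity
\begin{gather*}
L_{Y_j^m}V_i^{(k)}-L_{Y_i^m}V_j^{(k)}=\pm(j-i)\,V^{(k)}_{i+j-(n-m+2)},
\end{gather*}
and this is precisely the step you flag but do not prove. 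The scaling argument only shows that both sides are homogeneous of the same degree under $p_k\mapsto\varepsilon^k p_k$, $q_k\mapsto\varepsilon^{-k}q_k$; it cannot produce the coefficient $(j-i)$, nor rule out that the left-hand side is some other function of that degree not proportional to $V^{(k)}_{i+j-(n-m+2)}$. Likewise, Lemma~\ref{111} only supplies $L_{Y_r^m}V_1^{(k)}=0$, i.e., that $H_1^m$ commutes with the $W_r^m$; it says nothing about the closure above. To finish one must insert the explicit Killing vectors from \eqref{W1}--\eqref{W2} and the potentials $V_r^{(n)}=q_r$, $V_r^{(-1)}=q_{r-1}/q_n$ and compute; since $L_{Y_j^m}V_i^{(k)}$ then reduces to reading off components of $Y_j^m$, this is short but unavoidable.

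Your treatment of the shifted index $0$ is also only half right. For $k=n$ one indeed has $V_0^{(n)}=q_0=1$, so the bracket produces $\pm(j-i)\alpha=\pm(j-i)h_0^m$, in agreement with the paper's $n=5$, $(m,k)=(0,n)$ example. But for $k=-1$ the analogous quantity vanishes: for instance with $n=4$, $(m,k)=(1,-1)$ one finds $L_{Y_3^1}V_2^{(-1)}-L_{Y_2^1}V_3^{(-1)}=0$ (both $Y_2^1=(0,0,1,0)^T$ and $Y_3^1=(0,2,q_1,0)^T$ annihilate $q_1$, $q_2$ and $q_4$), so $\{h_2^1,h_3^1\}=0$ rather than $\alpha$. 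Hence in the cases $(1,-1)$ and $(n+1,-1)$ the central element never actually appears on the right-hand side, and your closing formula $\pm(j-i)\bigl(E^m+\alpha V^{(k)}+W^m\bigr)_{i+j-(n-m+2)}$ with ``$V_0^{(k)}=1$'' is not uniformly valid across the four cases. This does not contradict the Proposition, but it is a case distinction your argument silently assumes away and which a complete proof must address.
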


Let us note that since the potentials $V_{r}^{(k)}$ in (\ref{maciej}) do not appear in the f\/irst part of Hamiltonian equations~(\ref{2}) they do not appear in the corresponding hydrodynamic system~(\ref{H}). Nor does the constant Hamiltonian $h_{0}^{m}=\alpha$. Therefore, the hydrodynamic system corresponding to~(\ref{h}) and the one corresponding to~(\ref{maciej}) are exactly the same.

Due to Lemma~\ref{111}, we will on what follows only need the potentials $V_{r}^{(n)}$ and $V_{r}^{(-1)}$. They can be calculated from~(\ref{pot}) and are in $q$-coordinates given by
\begin{gather*}
V_{r}^{(n)}=q_{r}, \qquad V_{r}^{(-1)}=\frac{q_{r-1}}{q_{n}}, \qquad r=1,\ldots,n.
\end{gather*}

Let us now perform the St\"{a}ckel transform of the Hamiltonians $h_{r}^{m}$ in~(\ref{maciej}) with respect to the parameter~$\alpha$. It means that we f\/irst solve the relation $h_{1}^{m}=\tilde{\alpha}$, i.e., $E_{1}^{m}+\alpha V_{1}^{(k)}=\tilde{\alpha}$, with respect to $\alpha$ which yields
\begin{gather*}
\tilde{h}_{1}^{m}=\alpha=-\frac{1}{V_{1}^{(k)}}E_{1}^{m}+\tilde{\alpha}\frac{1}{V_{1}^{(k)}}
\end{gather*}
and then replace $\alpha$ with $\tilde{h}_{1}^{m}$ in all the remaining Hamiltonians $h_{r}^{m}$, $r=2,\ldots,n$, which yields
\begin{gather*}
\tilde{h}_{r}^{m}= h_{r}^{m}\big\vert _{\alpha\rightarrow\tilde {h}_{1}^{m}}=E_{r}^{m}+\left( -\frac{1}{V_{1}^{(k)}}E_{1}^{m}+\tilde{\alpha
}\frac{1}{V_{1}^{(k)}}\right) V_{r}^{(k)}+W_{r}^{m}.
\end{gather*}
Thus, the St\"{a}ckel transform of (\ref{maciej}) with respect to $\alpha$ attains the form
\begin{gather}
\tilde{h}_{1}^{m} =-\frac{1}{V_{1}^{(k)}}E_{1}^{m}+\tilde{\alpha}\frac {1}{V_{1}^{(k)}},\label{h1}\\
\tilde{h}_{r}^{m} =E_{r}^{m}-\frac{V_{r}^{(k)}}{V_{1}^{(k)}}E_{1}^{m}+\tilde{\alpha}\frac{V_{r}^{(k)}}{V_{1}^{(k)}}+W_{r}^{m}=\tilde{H}_{r}
^{m}+W_{r}^{m},\qquad r=2,\dots ,n,\label{St}
\end{gather}
where
\begin{gather*}
\tilde{H}_{r}^{m}=\tilde{E}_{r}^{m}+\tilde{\alpha}\frac{V_{r}^{(k)}}{V_{1}^{(k)}},\qquad \tilde{E}_{r}^{m}=\frac{1}{2}p^{T}\tilde{K}_{r}\tilde{G}_{m}p.
\end{gather*}
Due to (\ref{h1}) the metric $\tilde{G}_{m}$ in $\tilde{h}_{1}^{m}$ is of the form
\begin{gather*}
\tilde{G}_{m}=-\frac{1}{V_{1}^{(k)}}G_{m}. 
\end{gather*}
Since $\tilde{G}_{m}$ is a conformal deformation of the metric $G_{m}$ by $V_{1}^{(k)}$, which in the considered four cases $(m,k)=(0,n),(n,n),(1,-1),(n+1,-1)$ satisf\/ies $L_{Y_{r}^{m}}(V_{1}^{(k)})=0$, all $n-1$ vector f\/ields~$Y_{r}^{m}$ are in these four cases Killing vectors for the metric~$\tilde{G}_{m}$ as well. Thus, the corresponding functions $W_{r}^{m}=p^{T}Y_{r}^{m}$ are constants of motion not only for~$H_{1}^{m}$ but also for $\tilde{H}_{1}^{m}$ and therefore the functions $\tilde{H}_{1}^{m},\ldots,\tilde{H}_{n}^{m},W_{2}^{m},\ldots,W_{n}^{m}$ also constitute a~maximally superintegrable system.

Due to the def\/inition of our St\"{a}ckel transform (\ref{St}), on the level of separation relations~(\ref{qs}) the St\"{a}ckel transform renders the following substitution:
\begin{gather}
\alpha\rightarrow\tilde{h}_{1}^{m},\qquad h_{1}\rightarrow\tilde{\alpha },\qquad h_{r}^{m}\rightarrow\tilde{h}_{r}^{m}\qquad \text{for} \quad r=2,\ldots,n.\label{sub}
\end{gather}
In consequence, the quasi-separation relations for our four cases have the form
\begin{gather}
\tilde{\alpha}\lambda_{i}^{n-1}+\lambda_{i}^{k}\tilde{h}_{1}^{m}+ \sum_{j=2}^{n}\lambda_{i}^{n-j}\tilde{h}_{j}^{m}=\frac{1}{2}\lambda_{i}^{m}\mu_{i}^{2}+\sum_{s=1}^{n}u_{is}(\lambda)\mu_{s},\qquad i=1,\ldots,n,\label{qRt}
\end{gather}
with $u_{ik}$ as in (\ref{basia}) (compare with~(\ref{qs})). However, applying the canonical transformation $\lambda_{i}\rightarrow\lambda_{i}^{-1}$, $\mu_{i}\rightarrow-\lambda_{i}^{2}\mu_{i}$ to the quasi-separation relations~(\ref{qRt}) with $(m,k)=(1,-1)$ one obtains the quasi-separation relations~(\ref{ola}) with $m=n+1$ while in the case $(m,k)=(n+1,-1)$ we receive after this transformation the quasi-separation relations~(\ref{ola}) with $m=1$ so they can not be considered as new systems. Let us thus focus on two remaining cases: $(m,k)=(0,n)$ and $(n,n)$. For both these cases the Killing tensors~$\tilde{K}_{r}$ can be expressed through the Killing tensors~$K_{r}$~(\ref{Kr}) as~\cite{maciej6}
\begin{gather}
\tilde{K}_{r}=K_{r+1}-K_{2}K_{r}, \qquad r=1,\ldots,n.\label{Kt}
\end{gather}

Let us now f\/ind whether the functions $\tilde{h}_{r}^{m}$ also constitute a~Poisson algebra. As it has been demonstrated in~\cite{maciej2}
\begin{gather}
\{\tilde{h}_{r},\tilde{h}_{s}\}=\sum_{i,j=1}^{n}\big( A^{-1}\big)_{ri}\big( A^{-1}\big) _{sj} \{ h_{i},h_{j} \}, \label{blee}
\end{gather}
where the $n\times n$ matrix $A$ is of the form
\begin{gather*}
A_{ij} =\delta_{ij},\qquad j=2,\ldots,n, \qquad A_{i1} =-\frac{\partial h_{i}}{\partial\alpha}=-V_{i}^{(k)},\qquad i=1,\ldots,n.
\end{gather*}
So, in general, there is no guarantee that the functions $\tilde{h}_{r}$ constitute an algebra even if~$h_{r}$ do.

\begin{Proposition}In the two cases $(m,k)=(0,n)$ and $(n,n)$ the functions $\tilde{h}_{r}^{m}$ constitute a~Poisson algebra, with the structure constants obtained from the structure constants of the algebra of~$h_{r}^{m}$ given in Proposition~{\rm \ref{prop}} through the substitution~\eqref{sub}.
\end{Proposition}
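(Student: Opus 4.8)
The plan is to compute the right-hand side of the St\"ackel-transform bracket identity \eqref{blee} explicitly, exploiting that the matrix $A$ is only an elementary modification of the identity and that $h_1^m$ is central in the relevant cases.

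First I would invert $A$. Since $A_{ij}=\delta_{ij}$ for $j=2,\dots,n$ while $A_{i1}=-V_i^{(k)}$, the matrix $A$ agrees with the identity outside its first column, and so does $A^{-1}$: one has $(A^{-1})_{ri}=\delta_{ri}$ for all $i\ge 2$, all the remaining (nontrivial) entries being confined to the column $i=1$. Plugging this into \eqref{blee}, every term that carries a factor $(A^{-1})_{r1}$ or $(A^{-1})_{s1}$ comes multiplied by a bracket of the form $\{h_1^m,h_j^m\}$ or $\{h_i^m,h_1^m\}$. In the two cases $(m,k)=(0,n),(n,n)$ the Hamiltonian $h_1^m=H_1^m$ is central (by Lemma~\ref{111}, $\{H_1^m,H_j^m\}=0$ and $\{H_1^m,W_r^m\}=0$), so all such terms vanish; since moreover $(A^{-1})_{rr}=1$ for $r\ge 2$, the identity collapses to $\{\tilde h_r^m,\tilde h_s^m\}=\{h_r^m,h_s^m\}$ for $r,s\ge 2$ and $\{\tilde h_1^m,\tilde h_s^m\}=0$, where the right-hand sides are to be read on the transformed phase space, i.e.\ with $\alpha$ replaced by $\tilde h_1^m$.

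Next I would substitute the commutation relations of Proposition~\ref{prop}. Writing $\{h_r^m,h_s^m\}=\pm(s-r)\,h_{i_0}^m$ with $i_0=r+s-(n-m+2)$ and the convention $h_0^m=\alpha$, it only remains to evaluate the single output $h_{i_0}^m$ at $\alpha=\tilde h_1^m$; this is precisely what generates the substitution \eqref{sub}. Indeed, by the very definition \eqref{St} one has $h_i^m|_{\alpha\to\tilde h_1^m}=\tilde h_i^m$ for $i\ge 2$; for $i_0=1$ the defining property of the transform, namely that $\tilde h_1^m$ solves $h_1^m=\tilde\alpha$, gives $h_1^m|_{\alpha\to\tilde h_1^m}=\tilde\alpha$; and for $i_0=0$ one simply has $h_0^m|_{\alpha\to\tilde h_1^m}=\alpha|_{\alpha\to\tilde h_1^m}=\tilde h_1^m$. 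Collecting the three cases reproduces exactly the index swap $0\leftrightarrow 1$ carried by \eqref{sub}, so the $\tilde h_r^m$ close on themselves once the constant $\tilde\alpha$ is adjoined as the (central) generator $\tilde h_0^m$, and the structure constants are the images under \eqref{sub} of those in Proposition~\ref{prop}.

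The inversion of $A$ and the vanishing of the $i=1$ terms are routine. The point that I would check most carefully is the bookkeeping of the two degenerate outputs $i_0\in\{0,1\}$: they occur only for $m=0$ (for $m=n$ every output index already satisfies $i_0\ge 2$, so closure within $\operatorname{Span}\{\tilde h_2^m,\dots,\tilde h_n^m\}$ is automatic), and it is exactly their correct reassignment $\alpha\mapsto\tilde h_1^m$, $h_1^m\mapsto\tilde\alpha$ that both forces the enlargement of the algebra by the constant $\tilde\alpha$ and secures closure. I would finally remark that the same argument works verbatim for the remaining admissible pairs $(1,-1)$ and $(n+1,-1)$, but since the canonical transformation $\lambda_i\to\lambda_i^{-1}$ sends those back to the systems \eqref{ola} with $m=n+1$ and $m=1$, only $(0,n)$ and $(n,n)$ are retained as genuinely new.
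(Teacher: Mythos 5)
Your proposal is correct and follows essentially the same route as the paper: the paper simply asserts that formula \eqref{blee} reduces to $\{\tilde{h}_{r}^{m},\tilde{h}_{s}^{m}\}=\{h_{r}^{m},h_{s}^{m}\}$ and then invokes the substitution \eqref{sub}, while you supply the details it leaves implicit (the explicit form of $A^{-1}$, the vanishing of the $i=1$ and $j=1$ terms via the centrality of $h_{1}^{m}$ from Lemma~\ref{111}, and the bookkeeping of the degenerate output indices $i_{0}\in\{0,1\}$). These added details are accurate and strengthen the argument without changing its substance.
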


To see this, it is enough to realize that in both cases $(k,m)=(0,n)$ and~$(n,n)$ the formula~(\ref{blee}) reduces to
\begin{gather}
\big\{\tilde{h}_{r}^{m},\tilde{h}_{s}^{m}\big\}=\big\{ h_{r}^{m},h_{s}^{m}\big\} \qquad \text{for all} \ \ r,\,s,\label{zuzia}
\end{gather}
while the substitution (\ref{sub}) in the right-hand side of~(\ref{zuzia}) allows for expressing $\{h_{r}^{m},h_{s}^{m}\}$ in terms of the Hamiltonians~$\tilde{h}_{r}^{m}$.

\begin{Example}For $n=5$, $m=0$ and $k=n$ the Hamiltonians $h_{r}^{m}$ in~(\ref{maciej}) constitute a Poisson algebra with the following non-zero brackets:
\begin{gather*}
 \{ h_{2},h_{5} \} =3\alpha , \qquad \{ h_{3},h_{4} \} =\alpha , \qquad \{ h_{3},h_{5} \} =2h_{1}, \qquad \{h_{4},h_{5} \} =h_{2},
\end{gather*}
and after the St\"{a}ckel transform (\ref{St}) the corresponding algebra of~$\tilde{h}_{1}^{m}$ has due to~(\ref{sub}) the following non-zero brackets
\begin{gather*}
\big\{\tilde{h}_{2},\tilde{h}_{5}\big\}=3\tilde{h}_{1}, \qquad \big\{\tilde{h}_{3},\tilde{h}_{4}\big\}=\tilde{h}_{1}, \qquad \big\{\tilde{h}_{3},\tilde{h}_{5}\big\}=2\tilde{\alpha}, \qquad \big\{\tilde{h}_{4},\tilde{h}_{5}\big\}=\tilde{h}_{2}.
\end{gather*}
\end{Example}

Our two cases, that is $(m,k)=(0,n)$ or $(n,n)$, lead after the St\"{a}kel transform to two new non-homogeneous hydrodynamic systems through the Theorem~\ref{jedyne}. They have the form
\begin{gather}
q_{t_{r}}=\tilde{K}_{r}q_{x}+Y_{r}\equiv\tilde{Z}_{r}, \qquad r=2,\ldots,n,\label{sylwia}
\end{gather}
with $Y_{r}$ for $m=0$ and $m=n$ the same as before and with $\tilde{K}_{r}$ given by~(\ref{Kt}).

\begin{Example}For $n=4$ the systems (\ref{sylwia}) attain in Vi\`{e}te coordinates $q$ the form
\begin{gather*}
\left[
\begin{matrix}
q_{1}\\
q_{2}\\
q_{3}\\
q_{4}
\end{matrix}
\right] _{t_{2}}=\left[
\begin{matrix}
q_{2} & -q_{1} & 0 & 0\\
q_{1}q_{2} & q_{2}-q_{1}^{2} & -q_{1} & 0\\
q_{1}q_{3} & 0 & q_{2}-q_{1}^{2} & -q_{1}\\
q_{1}q_{4} & 0 & 0 & q_{2}-q_{1}^{2}
\end{matrix}
\right] \left[
\begin{matrix}
q_{1}\\
q_{2}\\
q_{3}\\
q_{4}
\end{matrix}
\right] _{x}+Y_{2}=\tilde{Z}_{2},
\\
\left[
\begin{matrix}
q_{1}\\
q_{2}\\
q_{3}\\
q_{4}
\end{matrix}
\right] _{t_{3}}=\left[
\begin{matrix}
q_{3} & 0 & -q_{1} & 0\\
q_{1}q_{3} & q_{3} & -q_{1}^{2} & -q_{1}\\
q_{1}q_{4} & q_{1}q_{3} & q_{3}-q_{1}q_{2} & -q_{1}^{2}\\
0 & q_{1}q_{4} & 0 & q_{3}-q_{1}q_{2}
\end{matrix}
\right] \left[
\begin{matrix}
q_{1}\\
q_{2}\\
q_{3}\\
q_{4}
\end{matrix}
\right] _{x}+Y_{3}=\tilde{Z}_{3},
\\
\left[
\begin{matrix}
q_{1}\\
q_{2}\\
q_{3}\\
q_{4}
\end{matrix}
\right] _{t_{4}}=\left[
\begin{matrix}
q_{4} & 0 & 0 & -q_{1}\\
q_{1}q_{4} & q_{4} & 0 & -q_{1}^{2}\\
0 & q_{1}q_{4} & q_{4} & -q_{1}q_{2}\\
0 & 0 & q_{1}q_{4} & q_{4}-q_{1}q_{3}
\end{matrix}
\right] \left[
\begin{matrix}
q_{1}\\
q_{2}\\
q_{3}\\
q_{4}
\end{matrix}
\right] _{x}+Y_{4}=\tilde{Z}_{4},
\end{gather*}
where
\begin{alignat*}{3}
& \text{for }m =0\colon \quad && Y_{2}=(0,0,0,1)^{T},\qquad Y_{3}=(0,0,2,q_{1})^{T},\qquad Y_{4}=(0,3,2q_{1},q_{2})^{T}, &\\
& \text{for }m =4\colon \quad && Y_{2}=(0,q_{2},2q_{3},3q_{4})^{T},\qquad Y_{3}=(0,q_{3},2q_{4},0)^{T},\qquad Y_{4}=(0,q_{4},0,0)^{T},&
\end{alignat*}
which constitute Lie algebras with the following nonzero elements given by Theorem~\ref{jedyne} \linebreak and~(\ref{str})
\begin{alignat*}{3}
& m=0\colon \quad && [\tilde{Z}_{2},\tilde{Z}_{4}]=-2\tilde{Z}_{1},& \\
& m=4\colon \quad && [\tilde{Z}_{2},\tilde{Z}_{3}]=\tilde{Z}_{3},\qquad [\tilde{Z}_{2},\tilde{Z}_{4}]=2\tilde{Z}_{4}, &
\end{alignat*}
where $\tilde{Z}_{1}=(q_{1},q_{2},q_{3},q_{4})_{x}^{T}$.
\end{Example}

\pdfbookmark[1]{References}{ref}
\LastPageEnding

\end{document}